\documentclass[11pt]{article}

\usepackage[margin=1in]{geometry}
\usepackage{amsmath,amssymb,amsthm,mathtools}
\usepackage{enumitem}
\usepackage{setspace}
\usepackage[round]{natbib}
\usepackage[colorlinks=true,linkcolor=blue,citecolor=blue,urlcolor=blue]{hyperref}
\usepackage[nameinlink,noabbrev]{cleveref}
\usepackage[australian]{babel}
\usepackage[T1]{fontenc}
\usepackage{mathpazo}
\usepackage{microtype}
\usepackage{tikz}
\usetikzlibrary{arrows.meta}

\newtheorem{proposition}{Proposition}
\newtheorem{theorem}{Theorem}
\newtheorem{corollary}{Corollary}

\newcommand{\E}{\mathbb{E}}
\newcommand{\Prb}{\mathbb{P}}
\newcommand{\cav}{\operatorname{cav}}
\newcommand{\term}[1]{\emph{#1}}

\title{Calibrated Bait: Defensive Information Design under Adversarial Fingerprinting}
\author{Joshua S. Gans\thanks{Rotman School of Management, University of Toronto (105 St George St, Toronto, ON,  M5S 3E6, Canada) and NBER. Email: \href{mailto:joshua.gans@utoronto.ca}{joshua.gans@utoronto.ca}. Thanks to Refine.ink, ChatGPT 5.6 Sol and Claude Fable 5 for valuable research assistance. Responsibility for all errors remains my own.}}
\date{25 August 2026}

\begin{document}
\maketitle

\begin{abstract}
\noindent Consider a persuasion problem in which a receiver benefits from acting in one state while the sender benefits from that action only in the other. A cybersecurity example is a honeytoken: a fake password or access key that reveals an intruder when used. When detecting the intruder is sufficiently valuable, the defender wants the intruder to be just willing to act. But that same belief gives the intruder the strongest incentive to check whether the object is real. This result holds for every informative private test, whatever its possible outcomes, including imperfect tests, and extends to choices among fixed-price tests when checking is profitable. With a perfectly accurate test that does not trigger an alert, expensive verification leaves the usual persuasion outcome unchanged. At intermediate costs, when the existing share of decoys requires distinct groups of objects, the defender must reduce the proportion of decoys among objects the intruder is willing to use. Testing need not occur: its availability alone then benefits the intruder and harms the defender. At lower costs, the defender may instead concentrate decoys where they discourage attacks. The outcome depends on the existing proportion of decoys and the cost of rearranging them. In the intermediate-cost case with separate groups, cheaper verification can reduce triggered decoys without reducing attacks on genuine resources, making observed security alerts an unreliable measure of actual exposure.

\medskip
\noindent\textbf{Keywords:} Bayesian persuasion, information acquisition, verification, defensive deception, cybersecurity, honeytokens.

\smallskip
\noindent\textbf{JEL Classification:} C72, D82, D83, L86.
\end{abstract}

\newpage

\section{Introduction}\label{sec:introduction}

A familiar solution to a persuasion problem is to leave a receiver just willing to take the action the sender wants. But suppose the receiver can first pay to discover whether that action would actually be worthwhile. The belief that makes the receiver indifferent about acting can also make learning the true state especially attractive. The problem is sharper when the sender benefits from action precisely in the state the receiver would prefer to avoid.

Consider a defender who places a fake password or access key among genuine ones. If an intruder uses a genuine key, the intruder gains access and the defender suffers a loss. If the intruder uses the fake key, the defender can detect the intrusion and learn how the attack occurred. A monitored decoy of this kind is called a \term{honeytoken} \citep{spitzner2003catching}. Its usefulness depends on persuading an intruder to treat it as real. The defender therefore favours an action that the intruder would not take if the intruder knew the object was a decoy.

Hugging Face, an online platform for artificial-intelligence models and data, illustrates the setting. In 2023, researchers found 1,681 working access credentials---passwords or comparable digital keys---exposed in publicly available code repositories; those credentials gave access to 723 organisations \citep{lasso2023huggingface}. In July 2026, an automated-agent evaluation led to an intrusion into the platform's computer systems, with roughly 17,600 automated actions reconstructed over four and a half days \citep{openai2026hfincident,huggingface2026intrusion}. False credentials can help identify subsequent intruders, but attackers may inspect technical features to recognise decoys before using them. Such checking is called \term{fingerprinting} \citep{krawetz2004antihoneypot}. \citet{vetterl2018bitter} identified 7,605 decoy computer systems at internet scale, while \citet{msaad2023honeysweeper} study how monitored credentials can be identified without setting off an alert.

The defender is the sender in a binary-state persuasion game; the attacker is the receiver. An object brought to the attacker's attention is either genuine or a monitored decoy. The defender publicly commits to an appearance policy: a rule determining which genuine objects and decoys look alike to the attacker. After observing an object, the attacker can use it immediately, decline to use it, or first acquire private information and then decide. The existing fraction of decoys is initially fixed, and the defender cannot simply withdraw the underlying service to prevent an attack. Only one object and one attacker type are modelled. Authorised access is routed through a separate protected system that does not disclose which objects are decoys; otherwise, a visible rule for recognising genuine objects would also be available to attackers \citep{gans2026whenagentstalk}.

After observing an object's appearance, the attacker updates the probability it assigns to that object being a decoy. An attack is worthwhile when this perceived risk is low enough for the potential gain from a genuine object to offset the loss from encountering a decoy. The highest tolerable decoy probability is the attacker's participation boundary. When identifying an intruder is sufficiently valuable, the defender wants the group of objects the attacker is willing to use to contain as many decoys as that boundary permits. If the existing share of decoys exceeds the boundary but some genuine objects remain, the defender can create one group that leaves the attacker just willing to act and assign the remaining decoys to another group that the attacker avoids. This is the familiar logic of optimal disclosure and Bayesian persuasion \citep{rayo2010optimal,kamenica2011bayesian}, applied to a sender whose preferred action depends on the state.

That familiar solution also gives the attacker the strongest incentive to check. Below the participation boundary, the attacker would otherwise act, so additional information is valuable because it can identify a decoy that should be avoided. Above the boundary, the attacker would otherwise walk away, so information is valuable because it can identify a genuine object worth using. These incentives meet at the participation boundary. Every informative private test is uniquely most valuable there, whether the test is perfectly accurate or imperfect and whatever form its outcomes take. At that belief, the test's value equals the value of perfect information multiplied by the statistical difference between its outcomes for genuine objects and decoys, measured by their total-variation distance. The same conclusion applies when the attacker can choose among tests whose accuracy and price do not vary with an object's appearance, provided at least one test is profitable.

Suppose the attacker can check perfectly without alerting the defender. If the check is expensive, the conventional persuasion solution survives. At intermediate costs, when the existing share of decoys makes separate appearances optimal, the defender reduces the share of decoys in the attacked group until the attacker is just indifferent between using an object immediately and checking it. No check occurs in the selected equilibrium, but its availability nevertheless harms the defender and benefits the attacker. Within a fixed, non-degenerate design regime, cheaper checking also makes observable appearances more revealing about which objects are genuine. When checking becomes sufficiently cheap, the defender may instead make all objects look alike, divide them between groups that are attacked and protected, or deter attack entirely, depending on the existing fraction of decoys.

When identifying an intruder is less valuable, the defender can use decoys primarily to protect genuine objects instead. If checking would otherwise defeat that protection, the protected group must contain a higher proportion of decoys than is required to deter an attacker who cannot check. At lower positive costs, the defender may place decoys in both the attacked and protected groups. The same basic design also survives when the cost of checking depends on the perceived probability that an object is fake, provided the incentive to check crosses zero only once on either side of the participation boundary.

The general result about the value of information does not require a perfect test. The complete design characterisation and the absence of checking in equilibrium do: with an imperfect test, the defender may prefer an outcome in which the attacker checks. The existing composition also matters. If the defender can freely choose the fraction of decoys, an optimal design can simply give all objects the same appearance. Separate appearances become relevant when enough decoys are already installed and changing the installed composition is costly. That cost concerns replacing or reassigning existing objects while keeping the underlying service available, not merely producing another decoy.

The closest information-design models also allow a receiver to acquire information after observing a sender's signal. \citet{bizzotto2020testing} study an applicant who wants approval in either state and a certifier who can purchase a test. \citet{matyskova2023bayesian} analyse costly learning under general sender and receiver preferences and separately examine a fixed-fee, fixed-precision test. \citet{yang2026verification} studies verification of project quality when the sender always seeks acceptance. Most closely, \citet{lukyanov2026verification} consider perfectly accurate verification by receivers with different testing costs and a sender who values acceptance in either state. They also recognise that checking concentrates near the receiver's decision threshold. Here, the sender instead benefits when the receiver acts in the state the receiver prefers to avoid and loses when the receiver acts in the favourable state. Combined with the common verification maximum for arbitrary tests and choices among tests, that payoff reversal generates a choice between attracting the attacker, protecting genuine objects and deterring attack.

Related cybersecurity research examines how defenders place and conceal monitored decoys \citep{carroll2011deception,pibil2012honeypot,schlenker2018deceiving,thakoor2019camouflage}, signalling games with evidence \citep{pawlick2019leaky}, and information design for defensive deception \citep{huang2021duplicity}. The model also distinguishes attacks on genuine objects from attempts that trigger decoy alerts. Within the intermediate-cost regime in which objects receive separate appearances, cheaper checking reduces decoy alerts without reducing attacks on genuine objects. Fewer alerts can therefore reflect an attacker's improved ability to recognise decoys rather than an improvement in security, especially when the same deployment is encountered repeatedly.

\section{Appearance design}\label{sec:appearance}

A deployed service contains genuine resources and defensive traps. The analysis begins after one candidate object has come to an attacker's attention and ends if the attacker rejects it; search across several candidates is outside the model. Write $\omega\in\{\mathsf G,\mathsf D\}$ for the state, with $\mathsf G$ denoting a genuine object and $\mathsf D$ a trap, and let
\begin{equation}
q\triangleq\Prb(\omega=\mathsf D)\in[0,1]
\label{eq:prior-prevalence}
\end{equation}
be the installed trap share. The attacker receives gross value $v$ from a genuine object, pays $c>0$ to attack either state and incurs an additional loss $\ell\geq0$ on a trap; assume $v>c$. Genuine compromise costs the defender $L>0$, while activation of a trap yields net intelligence benefit $B>0$. This benefit can include containment or information about the intruder, but excludes continuation payoffs counted separately.

Deployment is required: the underlying service and its defensive coverage cannot be withdrawn merely because the current encounter has negative expected value. A negative payoff is consequently a loss conditional on maintaining an essential deployment, not a reason to abandon the service. If withdrawal were feasible and costless before the appearance choice, a zero-payoff withdrawal option would replace every negative optimal value by zero, and the loss-making pooling and segmentation cases below would be replaced by withdrawal.

The appearance experiment concerns the presentation observed by the attacker, not a public label identifying individual traps. Legitimate access is assumed to be separated from those appearances by an existing private control plane: a protected registry and reference monitor can, for example, route authorised agents through broker-issued, task-scoped capabilities without revealing which candidate objects are deceptive. If legitimate and compromised agents instead share a visible rule identifying genuine objects, the attacker can copy that rule and the appearance experiments assumed here need not be feasible. The model conditions on infrastructure that preserves this separation \citep{gans2026whenagentstalk}; physical restrictions on appearances, legitimate-user activation and attack paths avoiding the candidate are outside its scope.

Conditional on the installed composition, the defender commits to an attacker-visible appearance experiment $\sigma(s\mid\omega)$. After observing appearance $s$, the attacker forms posterior
\begin{equation}
\mu(s)\triangleq\Prb(\omega=\mathsf D\mid s).
\label{eq:posterior}
\end{equation}
Any induced distribution $\Pi$ over posteriors must satisfy Bayes plausibility,
\begin{equation}
\E_{\Pi}[\mu]=q.
\label{eq:bayes-plausibility}
\end{equation}
The analysis assumes that, conditional on the private control plane, every Bayes-plausible appearance experiment is feasible and costless. Constraints on implementation would restrict that set and can lower the resulting value.

The game has four stages. First, given the installed prior, the defender publicly commits to the appearance experiment. Second, the candidate's state is drawn and the experiment generates its observable appearance. Third, the attacker observes that appearance and, when a diagnostic is available, chooses whether to acquire it privately. Fourth, the attacker attacks or abstains and payoffs are realised. The prior, payoff parameters, committed experiment and diagnostic technologies are common knowledge; the realised state is available to the defender's private control plane but not disclosed to the attacker. The solution is a sequential equilibrium of this public-commitment game: the defender chooses an experiment anticipating sequentially optimal acquisition and attack decisions, with Bayesian beliefs at every realised appearance.

Without an additional diagnostic, blind attack at posterior $\mu$ gives the attacker
\begin{equation}
u_B(\mu)=(1-\mu)v-\mu\ell-c.
\label{eq:blind-utility}
\end{equation}
Abstention gives zero, so blind attack is weakly preferred when
\begin{equation}
\mu\leq\widehat\mu
\triangleq\frac{v-c}{v+\ell}\in(0,1).
\label{eq:participation-posterior}
\end{equation}
Define genuine compromise and trap activation as the unconditional probabilities
\begin{equation}
G\triangleq\Prb(\omega=\mathsf G,\text{attack}),
\qquad
T\triangleq\Prb(\omega=\mathsf D,\text{attack}).
\label{eq:outcome-definitions}
\end{equation}
The defender's expected payoff is $BT-LG$. Conditional on blind attack at posterior $\mu$, that payoff is
\begin{equation}
d(\mu)\triangleq\mu B-(1-\mu)L=(B+L)\mu-L,
\label{eq:defender-blind-payoff}
\end{equation}
while abstention gives zero.

When the attacker is indifferent, the maintained tie convention selects the action preferred by the defender. It determines exact boundary actions and attainment, but not the limiting value of the optimal experiment. If both parties are indifferent at the participation boundary, abstention is selected. When a free perfect diagnostic creates endpoint ties, blind attack is selected at $\mu=0$ and abstention at $\mu=1$; the tied actions have identical payoffs there.

For a bounded upper-semicontinuous posterior payoff $w$, write $\cav w$ for its least concave majorant. The binary-state splitting and concavification results of \citet{kamenica2011bayesian} give
\begin{equation}
\max_{\Pi:\,\E_{\Pi}[\mu]=q}\E_{\Pi}[w(\mu)]
=\cav w(q),
\label{eq:concavification}
\end{equation}
and an optimal experiment can be chosen with support on at most two posteriors. The defender designs a distribution of attacker beliefs consistent with the installed composition, rather than an arbitrary label for the underlying objects.

At the participation boundary, the defender's payoff from blind attack is
\begin{equation}
d(\widehat\mu)
=\frac{B(v-c)-L(c+\ell)}{v+\ell}.
\label{eq:boundary-defender-payoff}
\end{equation}
Its sign determines whether the participation boundary is used to induce attack or to protect genuine objects.

\begin{proposition}\label{prop:no-test-design}
Without fingerprinting, the optimal defender value is as follows.
\begin{enumerate}[label=(\roman*),leftmargin=2.2em]
\item If $B(v-c)>L(c+\ell)$, then
\begin{equation}
V_{\infty}(q)=
\begin{cases}
d(q),&q\leq\widehat\mu,\\[4pt]
\displaystyle\frac{1-q}{1-\widehat\mu}d(\widehat\mu),&q>\widehat\mu.
\end{cases}
\label{eq:no-test-high-value}
\end{equation}
For $\widehat\mu<q<1$, an optimal experiment has support $\{\widehat\mu,1\}$, with an attacked pool at $\widehat\mu$ and an avoided pool at $1$.

\item If $B(v-c)\leq L(c+\ell)$, then
\begin{equation}
V_{\infty}(q)=
\begin{cases}
\displaystyle-L\left(1-\frac{q}{\widehat\mu}\right),&q<\widehat\mu,\\[6pt]
0,&q\geq\widehat\mu.
\end{cases}
\label{eq:no-test-low-value}
\end{equation}
For $0<q<\widehat\mu$, one optimal experiment has support $\{0,\widehat\mu\}$, with an attacked genuine-only pool and a protected pool at $\widehat\mu$.
\end{enumerate}
At a prior of zero or one, only the corresponding degenerate posterior is realised.
\end{proposition}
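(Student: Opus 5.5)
The plan is to reduce the design problem to concavification via \eqref{eq:concavification}. I first derive the defender's indirect payoff as a function of the attacker's posterior, and then compute its least concave majorant separately in each parameter case.

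Under the tie convention, the attacker attacks blindly when $\mu<\widehat\mu$ and abstains when $\mu>\widehat\mu$. At $\mu=\widehat\mu$ the attacker is indifferent, so the action the defender prefers is selected: attack when $d(\widehat\mu)>0$, and abstention when $d(\widehat\mu)\le 0$, since abstention is also chosen when both parties are indifferent. The defender's posterior payoff is therefore
\[
w(\mu)=d(\mu)\,\mathbf 1\{\mu<\widehat\mu\}+\max\{d(\widehat\mu),0\}\,\mathbf 1\{\mu=\widehat\mu\}.
\]
Two facts make this suitable for concavification. First, $w$ is bounded. Second, $w$ is upper semicontinuous. The only discontinuity is at $\widehat\mu$, and the value there equals $\max\{\lim_{\mu\uparrow\widehat\mu}d(\mu),0\}$, so $w(\widehat\mu)$ is at least both one-sided limits. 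It follows that $V_\infty=\cav w$, and that the optimum is attained with at most two posteriors.

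\emph{Case (i).} Here $d(\widehat\mu)>0$, so $w$ is the affine function $d$ on $[0,\widehat\mu]$ and zero on $(\widehat\mu,1]$. I propose the candidate majorant equal to $d$ on $[0,\widehat\mu]$ and to the chord from $(\widehat\mu,d(\widehat\mu))$ to $(1,0)$ on $[\widehat\mu,1]$. This chord is exactly $\frac{1-q}{1-\widehat\mu}d(\widehat\mu)$.

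To show this candidate is the concave majorant, I check three things.
\begin{enumerate}[label=(\alph*),leftmargin=2.2em]
\item It lies above $w$: the chord is nonnegative on $[\widehat\mu,1]$, and $w$ is zero there apart from the point $\widehat\mu$, where the two agree.
\item It is achievable: on $[0,\widehat\mu]$ it equals $w$, and on $[\widehat\mu,1]$ it is the value of splitting $q$ between $\widehat\mu$ and $1$.
\item It is concave: the function is piecewise linear, so it suffices that slopes weakly decrease. The left slope is $d'=B+L$. The right slope is $-d(\widehat\mu)/(1-\widehat\mu)<0$. Hence the slopes decrease.
\end{enumerate}
Since the candidate is concave, weakly above $w$, and attainable, it equals $\cav w$. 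The support $\{\widehat\mu,1\}$ then yields an attacked pool at the boundary and an avoided pool of traps.

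\emph{Case (ii).} Here $d(\widehat\mu)\le 0$, so $w(\widehat\mu)=0$ and $w=0$ on $[\widehat\mu,1]$, while $w=d\le0$ below $\widehat\mu$. I propose the candidate equal to the chord from $(0,-L)$ to $(\widehat\mu,0)$ on $[0,\widehat\mu]$, which is $-L(1-q/\widehat\mu)$, and equal to zero afterwards.

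The checks mirror case (i).
\begin{enumerate}[label=(\alph*),leftmargin=2.2em]
\item It is concave: the slopes are $L/\widehat\mu>0$ on the left and $0$ on the right, so they decrease.
\item It dominates $w$: on $[0,\widehat\mu)$ both the chord and $d$ are affine, they agree at $0$, and at $\widehat\mu$ the chord equals $0\ge d(\widehat\mu)$. Hence the chord lies weakly above $d$ throughout.
\item It is achievable: the value is attained by splitting between posteriors $0$ and $\widehat\mu$.
\end{enumerate}
For $q\ge\widehat\mu$, no concave function above $w$ can fall below $0$, because $w\ge 0$ there. Thus pooling or the degenerate posterior attains $0$. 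The degenerate priors $q\in\{0,1\}$ follow because Bayes plausibility forces the point mass.

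The main obstacle I expect is handling the boundary tie correctly. The upper semicontinuity of $w$, and hence the attainment of the supremum rather than only approaching it, depends on the tie convention. In case (ii) in particular, one must confirm that selecting abstention at $\widehat\mu$ when $d(\widehat\mu)=0$ still yields the stated value. I would check this explicitly. The rest of the argument is routine slope comparison.
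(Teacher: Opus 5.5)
Your proof is correct and follows essentially the same route as the paper. You use the same posterior payoff with $w(\widehat\mu)=\max\{d(\widehat\mu),0\}$ and the same piecewise-linear candidates, checked by slope comparison and domination; your attainability-by-splitting step plays the role of the paper's chord bound \eqref{eq:concave-chord-bound} for minimality. One small remark: the worry you flag at $d(\widehat\mu)=0$ is moot, since both actions then give the defender zero. The tie rule matters for attainment only when $d(\widehat\mu)\neq0$, and you have already built it into $w$.
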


\noindent The defender's preferred use of the participation boundary depends on the value of intelligence. When $B(v-c)>L(c+\ell)$, the attacked pool contains as many traps as participation permits, while surplus traps are assigned to an avoided pool. This is \term{calibrated bait}. When $B(v-c)<L(c+\ell)$, traps instead protect a pool at the same boundary while genuine objects in the other pool remain exposed. At equality and for $0<q<\widehat\mu$, pooling and protective segmentation are both optimal but can generate different activation and compromise probabilities; part (ii) specifies one optimal implementation. For $q\geq\widehat\mu$, all optimal implementations instead induce abstention.

The central design problem concerns the high-intelligence case
\begin{equation}
B(v-c)>L(c+\ell),
\label{ass:high-intelligence}
\end{equation}
for which the defender deliberately induces engagement at the attacker's participation boundary. The complementary case is considered after the main characterisation.

\section{The participation boundary and verification}\label{sec:verification}

Fingerprinting allows the attacker to examine implementation details through a channel that the appearance experiment does not control. Placement, naming and apparent access context can be changed without altering the protocol behaviour or emulation artefacts on which implementation-level classification relies. The benchmark therefore treats presentation and diagnostic evidence as separate technologies: the defender can reassign objects to appearance classes while holding conditional fingerprint distributions fixed. This separation is substantive. If an appearance choice also changes the fingerprinting technology, the payoff from one posterior can depend on the entire experiment, and the concavification problem below need not apply. The diagnostic is private and non-triggering: acquiring it does not alert the defender or directly change either party's payoff apart from its acquisition cost and the attack decision it informs. Outcomes can be arbitrary measurable signals, and the attacker may be able to choose among several diagnostic technologies.

Consider a measurable signal space $(\mathcal Z,\mathcal F)$ and let $P_G$ and $P_D$ denote the conditional signal probability measures in the genuine and trap states. Set $\nu\triangleq P_G+P_D$ and let
\[
p_G\triangleq\frac{dP_G}{d\nu},
\qquad
p_D\triangleq\frac{dP_D}{d\nu}
\]
be the corresponding Radon--Nikodym densities. Since both conditional measures are dominated by $\nu$, this representation covers finite, continuous, mixed and mutually singular diagnostics. Write $[x]_+\triangleq\max\{x,0\}$ and define the diagnostic's state-weighted contribution to the attacker's payoff from attacking after signal $z$ by
\begin{equation}
X(z,\mu)
\triangleq(1-\mu)(v-c)p_G(z)-\mu(c+\ell)p_D(z).
\label{eq:diagnostic-signal-contribution}
\end{equation}
After observing $z$, the attacker attacks when $X(z,\mu)>0$, abstains when $X(z,\mu)<0$ and is indifferent when it equals zero. Its gross value of diagnostic information is consequently
\begin{equation}
\mathcal I_P(\mu)
\triangleq\int_{\mathcal Z}[X(z,\mu)]_+\,d\nu(z)
-[u_B(\mu)]_+.
\label{eq:diagnostic-information-value}
\end{equation}
For a perfect diagnostic, the attacker attacks exactly in the genuine state. Its gross value is
\begin{equation}
\mathcal I(\mu)=
\begin{cases}
\mu(c+\ell),&\mu\leq\widehat\mu,\\[3pt]
(1-\mu)(v-c),&\mu\geq\widehat\mu,
\end{cases}
\label{eq:fingerprinting-tent}
\end{equation}
and its value at the participation boundary is
\begin{equation}
\bar k\triangleq\mathcal I(\widehat\mu)
=\frac{(v-c)(c+\ell)}{v+\ell}.
\label{eq:maximum-test-value}
\end{equation}
Define the distance between the two conditional signal laws by
\begin{equation}
\operatorname{TV}(P_G,P_D)
\triangleq\frac12\int_{\mathcal Z}|p_G(z)-p_D(z)|\,d\nu(z).
\label{eq:total-variation}
\end{equation}

\begin{proposition}\label{prop:diagnostic-demand}
For every fixed diagnostic $(P_G,P_D)$, $\mathcal I_P$ is continuous, weakly increasing and convex on $[0,\widehat\mu]$, and weakly decreasing and convex on $[\widehat\mu,1]$. For every posterior,
\begin{equation}
0\leq\mathcal I_P(\mu)
\leq\operatorname{TV}(P_G,P_D)\,\mathcal I(\mu),
\qquad
\mathcal I_P(\widehat\mu)
=\bar k\,\operatorname{TV}(P_G,P_D).
\label{eq:diagnostic-information-peak}
\end{equation}
If $P_G\neq P_D$, the maximiser is uniquely $\widehat\mu$; if $P_G=P_D$, then $\mathcal I_P(\mu)=0$ for every $\mu$.

Now suppose the attacker can acquire at most one diagnostic from a non-empty menu $\mathcal J$. Diagnostic $j$ has conditional laws $(P_G^j,P_D^j)$ and a finite acquisition cost $k_j\geq0$, both independent of the appearance posterior and experiment. Define the supremal net gain from access to the menu by
\begin{equation}
\mathcal S(\mu)
\triangleq\max\left\{0,\ \sup_{j\in\mathcal J}
\bigl[\mathcal I_{P^j}(\mu)-k_j\bigr]\right\}.
\label{eq:diagnostic-menu-value}
\end{equation}
Then $\mathcal S$ is continuous and has the same weak monotonicity and convexity on each side of $\widehat\mu$. Moreover,
\begin{equation}
0\leq\mathcal S(\mu)
\leq\frac{\mathcal I(\mu)}{\bar k}\mathcal S(\widehat\mu),
\qquad
\mathcal S(\widehat\mu)
=\max\left\{0,\ \sup_{j\in\mathcal J}
\bigl[\bar k\operatorname{TV}(P_G^j,P_D^j)-k_j\bigr]\right\}.
\label{eq:diagnostic-menu-bound}
\end{equation}
If $\mathcal S(\widehat\mu)>0$, its unique maximiser is $\widehat\mu$, and the posteriors at which some available diagnostic is strictly profitable form an open interval containing $\widehat\mu$. An optimal diagnostic is acquired wherever $\mathcal S(\mu)>0$ if the supremum in \eqref{eq:diagnostic-menu-value} is attained; attainment is automatic for finite menus. If $\mathcal S(\widehat\mu)=0$, no diagnostic in the menu is strictly profitable at any posterior.
\end{proposition}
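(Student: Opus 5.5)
The plan is to write the gross integral as a function that is piecewise linear in $\mu$ on each side of $\widehat\mu$, and then read every property off that representation. Set $a\triangleq v-c>0$ and $b\triangleq c+\ell>0$, so that $X(z,\mu)=(1-\mu)a\,p_G(z)-\mu b\,p_D(z)$ is affine in $\mu$ for each $z$. Then $F(\mu)\triangleq\int[X(z,\mu)]_+\,d\nu$ is a pointwise supremum of affine functions: it is the supremum over measurable acceptance sets $A$ of $(1-\mu)aP_G(A)-\mu bP_D(A)$. Hence $F$ is convex and finite, since it is bounded by $a$. A convex function on $[0,1]$ is continuous in the interior. Continuity at the endpoints follows from dominated convergence, because $[X]_+\le a\,p_G+b\,p_D$, which is $\nu$-integrable.

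On $[0,\widehat\mu]$ we have $[u_B]_+=u_B$, which is affine, so $\mathcal I_P=F-u_B$ is convex there. On $[\widehat\mu,1]$ we have $[u_B]_+=0$, so $\mathcal I_P=F$ is convex there. Continuity of $\mathcal I_P$ follows from the continuity of $F$ and $u_B$.

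For monotonicity I will use the identities $[x]_+=x+[-x]_+$ and $u_B(\mu)=\int X(z,\mu)\,d\nu$. These give $\mathcal I_P(\mu)=\int[-X]_+\,d\nu$ for $\mu\le\widehat\mu$ and $\mathcal I_P(\mu)=\int[X]_+\,d\nu$ for $\mu\ge\widehat\mu$. At $\mu=0$ the function $-X$ is nonpositive, so $\mathcal I_P(0)=0$, and at $\mu=1$ we have $X\le0$, so $\mathcal I_P(1)=0$. Nonnegativity is immediate from these representations. A convex function on $[0,\widehat\mu]$ that is nonnegative with value $0$ at the left endpoint is weakly increasing. The symmetric argument shows $\mathcal I_P$ is weakly decreasing on $[\widehat\mu,1]$.

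For the peak, at $\widehat\mu$ we have $(1-\widehat\mu)a=\widehat\mu b=\bar k$, which follows from $1-\widehat\mu=b/(v+\ell)$. Therefore $X(z,\widehat\mu)=\bar k\,(p_G-p_D)$, and so
\[
\mathcal I_P(\widehat\mu)=\bar k\int[p_G-p_D]_+\,d\nu=\bar k\operatorname{TV}(P_G,P_D).
\]
The upper bound $\mathcal I_P\le\operatorname{TV}\cdot\mathcal I$ then follows from convexity on each side: the chord from $(0,0)$ to $(\widehat\mu,\bar k\operatorname{TV})$ lies above $\mathcal I_P$ on $[0,\widehat\mu]$, and that chord is exactly $\operatorname{TV}\cdot\mathcal I$. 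The same holds on the right side.

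Uniqueness of the maximiser is the step needing care. Suppose $\operatorname{TV}>0$. Any $\mu<\widehat\mu$ lies strictly inside the chord. I will use $\mathcal I_P(\mu)\le(\mu/\widehat\mu)\,\mathcal I_P(\widehat\mu)<\mathcal I_P(\widehat\mu)$, which follows from convexity together with $\mathcal I_P(0)=0$. The right side is handled in the same way. If $P_G=P_D$, the upper bound forces $\mathcal I_P\equiv0$.

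For the menu, define $g_j\triangleq\mathcal I_{P^j}-k_j$. Each $g_j$ is convex on each side of $\widehat\mu$, and so is a supremum of them together with $0$. I will obtain continuity of $\mathcal S$ from equi-Lipschitz bounds rather than from convexity alone. On each closed side, $\mathcal I_{P^j}\le\mathcal I$, and the integrand representation gives slopes bounded by $a+b$. The family $\{g_j\}$ is therefore uniformly Lipschitz, so its supremum is Lipschitz.

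Monotonicity on $[0,\widehat\mu]$ follows because $\mathcal S$ is a supremum of weakly increasing functions there. The same holds on the other side.

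For the bound \eqref{eq:diagnostic-menu-bound}, I use that $\mathcal S$ is convex on $[0,\widehat\mu]$ with $\mathcal S(0)=\max\{0,\sup_j(-k_j)\}=0$. It therefore lies below the chord $(\mu/\widehat\mu)\,\mathcal S(\widehat\mu)=(\mathcal I(\mu)/\bar k)\,\mathcal S(\widehat\mu)$. The same argument applies on the right side. The formula for $\mathcal S(\widehat\mu)$ follows directly from the peak identity.

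If $\mathcal S(\widehat\mu)>0$, the strict chord argument again yields a unique maximiser at $\widehat\mu$. The set $\{\mathcal S>0\}$ is open by continuity and contains $\widehat\mu$. It is an interval because $\mathcal S$ is weakly increasing and then weakly decreasing, so each upper level set is an interval.

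If $\mathcal S(\widehat\mu)=0$, the bound gives $\mathcal S\equiv0$, so no diagnostic is strictly profitable at any posterior. Under attainment, the attacker acquires a maximising diagnostic whenever $\mathcal S>0$, because the net gain is strictly positive. Attainment is automatic for finite menus, since a finite supremum is a maximum.

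The main obstacle is continuity of $\mathcal S$ for infinite menus near the endpoints, where convexity alone gives no control. The uniform Lipschitz bound above is the fix I would try first.
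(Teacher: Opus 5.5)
Your proposal is correct and follows essentially the paper's route: the branch representation from $[x]_+=x+[-x]_+$ and $\int X\,d\nu=u_B$, the identity $X(z,\widehat\mu)=\bar k\,(p_G-p_D)$, convex chord bounds anchored at $\mathcal I_P(0)=\mathcal I_P(1)=0$ and $\mathcal S(0)=\mathcal S(1)=0$, and a diagnostic-independent Lipschitz constant $a+b$ for continuity of $\mathcal S$ with infinite menus. The local differences are immaterial: you derive monotonicity from convexity, nonnegativity and the vanishing endpoint rather than from pointwise monotonicity of the integrand, and continuity of a single $\mathcal I_P$ from dominated convergence rather than from the Lipschitz bound.
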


\noindent Below the participation boundary, the attacker would otherwise attack, so information is valuable because it identifies outcomes on which a trap should be avoided. Above it, the attacker would otherwise abstain, so information is valuable because it identifies outcomes on which a genuine object should be attacked. These incentives meet at $\widehat\mu$. The common maximum is a property of binary-state, binary-action receiver games; it requires neither a cybersecurity application nor a particular sender payoff. Here it becomes a strategic constraint because the sender's state-dependent objective selects that same posterior for engagement. Imperfect classification reduces the peak in proportion to the statistical distance between the two signal laws but does not move it. The bound is sharp: for any $\delta\in[0,1]$, a diagnostic that reveals the state with probability $\delta$ and otherwise returns an uninformative signal has total-variation distance $\delta$ and information value $\delta\mathcal I(\mu)$ at every posterior. Endogenous diagnostic precision preserves the common peak when each technology has a posterior-independent non-negative cost.

For any fixed informative diagnostic and cost $0\leq k<\mathcal I_P(\widehat\mu)$, continuity and the two monotonicity properties likewise imply that acquisition is strictly profitable on an open interval containing $\widehat\mu$. The exact optimal appearance experiment, however, depends on what happens after acquisition. With a noisy test, the attacker may still use traps or reject genuine objects, and testing can be strictly optimal on the equilibrium path.\footnote{Let $v=2$, $c=1$, $\ell=0$, $B=10$ and $L=1$. Suppose a three-outcome diagnostic has $P_G=(3/5,3/10,1/10)$, $P_D=(1/10,3/10,3/5)$ and acquisition cost $1/20$. Then $\widehat\mu=1/2$, $\operatorname{TV}(P_G,P_D)=1/2$ and $\mathcal I_P(\widehat\mu)=1/4$. The concave envelope coincides with the defender's posterior payoff on $[0,3/14]$ and has vertices at $0,3/14,1/2,1$, with respective payoffs $-1,19/14,31/20,0$. At prior $q=4/5$, an optimal experiment therefore has support $\{1/2,1\}$, with weight $2/5$ on $1/2$. At that posterior, testing is strictly optimal because its net gain is $1/4-1/20=1/5$.} The complete characterisation below therefore imposes perfect covert verification. For that diagnostic, the lower branch of \eqref{eq:fingerprinting-tent} is the avoided expected trap loss, and the upper branch is the recovered expected gain from a genuine object.

\begin{figure}[t]
\centering
\begin{tikzpicture}[x=9cm,y=5cm,>=Latex]
    \draw[->] (0,0) -- (1.04,0) node[right] {$\mu$};
    \draw[->] (0,0) -- (0,0.68) node[above] {$\mathcal I(\mu)$};
    \coordinate (M) at (0.55,0.55);
    \draw[thick] (0,0) -- (M) -- (1,0);
    \draw[dashed] (0.55,0) -- (M);
    \node[below] at (0.55,0) {$\widehat\mu$};
    \node[above] at (M) {$\bar k$};
    \node[anchor=south east] at (0.31,0.29) {$\mu(c+\ell)$};
    \node[anchor=south west] at (0.72,0.28) {$(1-\mu)(v-c)$};
\end{tikzpicture}
\caption{The gross value of a perfect fingerprinting test. The peak's horizontal position is schematic.}
\label{fig:fingerprinting-value}
\end{figure}

\section{Optimal appearance under perfect verification}\label{sec:fingerprinting}

Suppose the attacker can acquire a perfect covert diagnostic for a constant cost $k\geq0$. If the object is genuine, the attacker then attacks; if it is a trap, the attacker abstains. The resulting attacker payoff is
\begin{equation}
u_T(\mu;k)=(1-\mu)(v-c)-k.
\label{eq:test-utility}
\end{equation}
Testing relative to blind attack gives $u_T(\mu;k)-u_B(\mu)=\mu(c+\ell)-k$, while testing relative to abstention gives $u_T(\mu;k)$. Define the corresponding boundaries by
\begin{equation}
\mu_-(k)\triangleq\frac{k}{c+\ell},
\qquad
\mu_+(k)\triangleq1-\frac{k}{v-c}.
\label{eq:testing-boundaries}
\end{equation}
If $k\geq\bar k$, testing is never strictly optimal. If $0\leq k<\bar k$, then
\begin{equation}
0\leq\mu_-(k)<\widehat\mu<\mu_+(k)\leq1,
\label{eq:boundary-order}
\end{equation}
and the maintained tie convention gives
\begin{equation}
\begin{cases}
\text{blind attack},&0\leq\mu\leq\mu_-(k),\\
\text{test},&\mu_-(k)<\mu<\mu_+(k),\\
\text{abstain},&\mu_+(k)\leq\mu\leq1.
\end{cases}
\label{eq:action-regions}
\end{equation}
At $\mu_-(k)$, blind attack weakly improves the defender's payoff relative to testing by $\mu_-(k)B$; at $\mu_+(k)$, abstention weakly improves it by $(1-\mu_+(k))L$. These comparisons also explain the endpoint conventions when $k=0$.

The defender's induced posterior payoff is
\begin{equation}
w_k(\mu)=
\begin{cases}
d(\mu),&\text{under blind attack},\\
-(1-\mu)L,&\text{under testing},\\
0,&\text{under abstention}.
\end{cases}
\label{eq:posterior-payoff-testing}
\end{equation}
Testing removes every activation benefit while preserving the loss from genuine compromise. The defender-preferred boundary rule makes $w_k$ bounded and upper semicontinuous, so \eqref{eq:concavification} applies.

Let
\begin{equation}
\mu_0\triangleq\frac{L}{B+L}
\label{eq:zero-defender-posterior}
\end{equation}
denote the trap posterior at which blind engagement gives the defender zero payoff, and define
\begin{equation}
\underline k\triangleq(c+\ell)\mu_0
=\frac{L(c+\ell)}{B+L}.
\label{eq:lower-test-threshold}
\end{equation}
Under \eqref{ass:high-intelligence}, $\mu_0<\widehat\mu$ and therefore $0<\underline k<\bar k$. The boundary actions and exact attainment below use the maintained tie convention; other tie rules give the same supremal values and limiting outcomes.

\begin{theorem}\label{thm:optimal-design}
Suppose \eqref{ass:high-intelligence} holds and let $V_k(q)\triangleq\cav w_k(q)$. Define
\[
m_k\triangleq
\begin{cases}
\widehat\mu,&k\geq\bar k,\\
\mu_-(k),&k<\bar k,
\end{cases}
\qquad
n_k\triangleq
\begin{cases}
1,&k>\underline k,\\
\mu_+(k),&k\leq\underline k.
\end{cases}
\]
Then
\begin{equation}
V_k(q)=
\begin{cases}
d(q),&q\leq m_k,\\[4pt]
\displaystyle\frac{n_k-q}{n_k-m_k}\,d(m_k),&m_k<q<n_k,\\[6pt]
0,&q\geq n_k.
\end{cases}
\label{eq:value-regimes}
\end{equation}
For $q\leq m_k$, pooling at $q$ is optimal and the pool is attacked blindly; for $q\geq n_k$, pooling induces abstention; and for $m_k<q<n_k$, an optimal experiment has support $\{m_k,n_k\}$. When segmentation is optimal, its support falls into three cost regimes:
\begin{enumerate}[label=(\roman*),leftmargin=2.2em,itemsep=0.2em,topsep=0.2em,parsep=0pt]
\item If $k\geq\bar k$, the attacked pool is at $\widehat\mu$ and the avoided pool is at $1$.
\item If $\underline k<k<\bar k$, the attacked pool is at $\mu_-(k)<\widehat\mu$, where $d(\mu_-(k))>0$, and the avoided pool is at $1$.
\item If $0\leq k\leq\underline k$, then $d(\mu_-(k))\leq0$, and intermediate priors are split between an attacked pool at $\mu_-(k)$ and a protected pool at $\mu_+(k)$.
\end{enumerate}
\end{theorem}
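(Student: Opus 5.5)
The plan is to compute the least concave majorant of the piecewise function $w_k$ in \eqref{eq:posterior-payoff-testing} directly. I will use the action regions \eqref{eq:action-regions} and the fact that a concave majorant of a piecewise-linear function on $[0,1]$ is determined by its upper hull. Write $w_k$ explicitly in the two cases.

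If $k\ge\bar k$, $w_k=d$ on $[0,\widehat\mu]$ and $0$ on $(\widehat\mu,1]$. This is the no-test payoff, so Proposition~\ref{prop:no-test-design}(i) gives the result with $m_k=\widehat\mu$, $n_k=1$ (note $k\ge\bar k>\underline k$).

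If $k<\bar k$, then
\[
w_k=d\ \text{on}\ [0,\mu_-(k)],\qquad w_k=-(1-\mu)L\ \text{on}\ (\mu_-(k),\mu_+(k)),\qquad w_k=0\ \text{on}\ [\mu_+(k),1].
\]
The first piece is increasing linear. The middle piece lies weakly below $d$ (since $-(1-\mu)L\le d(\mu)$), and it is increasing with value $\le0$. The candidate hull vertices are therefore $(0,-L)$, $(m,d(m))$ with $m=\mu_-(k)$, and either $(\mu_+(k),0)$ or $(1,0)$.

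The key step is to show that no point of the middle piece lies above the chord through the candidate vertices, and then to choose the last vertex. I will compare slopes. When $d(\mu_-(k))>0$, which is equivalent to $\mu_-(k)>\mu_0$, i.e.\ $k>\underline k$, the chord from $(m,d(m))$ to $(1,0)$ has negative slope. Every point of the middle piece and the zero piece has height $\le0\le$ the chord on $[m,1]$ (the chord is nonnegative there), so the hull uses vertex $1$. It cannot use $\mu_+$, since $(\mu_+,0)$ lies below that chord. This gives $n_k=1$ and regime (ii), with $\mu_-(k)<\widehat\mu$ from \eqref{eq:boundary-order}.

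When $k\le\underline k$, we have $d(m)\le0$. Now the chord from $(m,d(m))$ to $(\mu_+,0)$ has nonnegative slope, and the zero piece beyond $\mu_+$ is flat, so the hull is $0$ on $[\mu_+,1]$ and the last vertex is $n_k=\mu_+(k)$. It remains to check that the middle piece $-(1-\mu)L$ lies below this chord. That piece is linear, with value $\le d(m)$ at $m^+$ (since $-(1-m)L\le d(m)$) and value $-(1-\mu_+)L<0$ at $\mu_+$. Its endpoint values are below the chord endpoints, so linearity gives the claim. At $k=\underline k$ we get $d(m)=0$, and the formula still holds.

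Finally, I will verify concavity on the left. On $[0,m]$ the hull equals $d$ itself, because the chord slope from $m$ onward, $-d(m)/(n-m)$, is at most $B+L$, the slope of $d$. For $n=1$ this reduces to $d(m)\ge -(1-m)(B+L)$, i.e.\ $mB\ge0$. For $n=\mu_+$ the chord slope is $\le -d(m)/(\mu_+-m)$ with $d(m)\ge-L(1-m)$; I will need $L(1-m)\le (B+L)(\mu_+-m)$ or the direct inequality $-d(m)\le (B+L)(\mu_+-m)$. This follows from $d(\mu_+)\ge d(m)+(B+L)(\mu_+-m)$ together with $d(\mu_+)\le B$.

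This slope verification, especially in case (iii), is the main obstacle, and I would check it carefully by writing the inequality in terms of $k$ using \eqref{eq:testing-boundaries}. Supports, attainment and the pooling statements then follow from \eqref{eq:concavification} together with upper semicontinuity under the tie convention.
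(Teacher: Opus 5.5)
Your overall route is the paper's. You use the same piecewise $w_k$ and the same candidate majorant: $d$ on $[0,m]$, then a chord from $(m,d(m))$ to $(n_k,0)$, then zero. Majorization comes from comparing affine pieces at their endpoints, and minimality holds because the chords join realised graph points. Regimes (i) and (ii), and the majorization step in (iii), are fine.

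The gap is the concavity check at the left kink in regime (iii). You rightly flag it as the crux, but your justification does not work. With $m=\mu_-(k)$ you need $-d(m)\le(B+L)(\mu_+-m)$.
\begin{itemize}
\item Your first sufficient condition, $L(1-m)\le(B+L)(\mu_+-m)$, is false in general. Take $v=2$, $c=1$, $\ell=0$, $B=11/10$, $L=1$ and $k=2/5$. Then \eqref{ass:high-intelligence} holds, $k\le\underline k=10/21<\bar k=1/2$, $m=2/5$ and $\mu_+=3/5$. But $L(1-m)=3/5>21/50=(B+L)(\mu_+-m)$.
\item Your second justification runs in the wrong direction. Combining $d(\mu_+)\ge d(m)+(B+L)(\mu_+-m)$ with the \emph{upper} bound $d(\mu_+)\le B$ only bounds $d(m)+(B+L)(\mu_+-m)$ from above. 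What you need is that this quantity is non-negative.
\end{itemize}

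The missing idea is that $d$ is affine with slope $B+L$, so $d(m)+(B+L)(\mu_+-m)=d(\mu_+)$ identically. The required inequality is therefore equivalent to $d(\mu_+(k))\ge0$, i.e.\ $\mu_+(k)\ge\mu_0$; this is the paper's condition $L\le(B+L)n$. It holds because $\mu_+(k)>\widehat\mu$ by \eqref{eq:boundary-order} and $\widehat\mu>\mu_0$ by \eqref{ass:high-intelligence}.

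This is exactly where the high-intelligence assumption enters regime (iii), and your argument never uses it there. That omission matters. If instead $B(v-c)<L(c+\ell)$, every $k<\bar k$ satisfies $k\le\underline k$, so your argument would apply. Yet for $\tilde k<k<\bar k$ one has $\mu_+(k)<\mu_0$, the chord slope $-d(m)/(\mu_+-m)$ exceeds $B+L$, and your candidate is not concave. The optimal support is then $\{0,\mu_+(k)\}$ (\Cref{prop:low-intelligence}(ii)). With this repair your proof coincides with the paper's.

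Two minor slips:
\begin{itemize}
\item In regime (ii), $d(m)\ge-(1-m)(B+L)$ reduces to $B\ge0$, not $mB\ge0$. This is harmless, since that slope is negative anyway.
\item At $k=0$ one has $\mu_+=1$, so $-(1-\mu_+)L=0$ and only the weak inequality holds.
\end{itemize}
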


\noindent The participation boundary attracts the defender because it maximises the trap content of an attacked pool, and attracts the attacker because it maximises the return to verification. Once verification is affordable there, blind attack can be sustained only at a lower posterior. At intermediate cost, $\mu_-(k)>\mu_0$, so the resulting \term{under-calibrated bait} remains profitable and excess traps form an avoided pool. Once $\mu_-(k)\leq\mu_0$, every posterior that deters verification while inducing attack gives the defender weakly negative value. The defender then leaves one pool exposed in order to protect the remaining objects at $\mu_+(k)$, or induces abstention when installed prevalence permits. Negative values are conditional on the maintained service requirement.\footnote{At $q=1$, Bayes plausibility forces the degenerate posterior $1$. At $k=\underline k$, $d(\mu_-(k))=0$, so every $q\geq\mu_-(k)$ gives value zero and the displayed support is one of generally many optimal implementations.}

Whenever the optimal support is $\{m_k,1\}$ with $m_k<q<1$, let $s^{\mathrm A}$ denote the attacked appearance and $s^{\mathrm N}$ the avoided appearance. An implementing sender strategy is
\begin{equation}
\sigma(s^{\mathrm A}\mid\mathsf G)=1,
\qquad
\sigma(s^{\mathrm A}\mid\mathsf D)
=\frac{m_k(1-q)}{q(1-m_k)},
\qquad
\sigma(s^{\mathrm N}\mid\omega)
=1-\sigma(s^{\mathrm A}\mid\omega).
\label{eq:attacked-pool-implementation}
\end{equation}
Thus every genuine object receives the attacked appearance, and precisely enough traps share it to induce posterior $m_k$; the remaining traps receive the avoided appearance.

If testing is instead selected at the lower boundary, replace that support point by $\mu_-(k)-\varepsilon$ for small $\varepsilon>0$ and adjust its probability to preserve the prior. Blind attack is then strict, and compromise, activation and expected value converge to those of the stated optimal experiment as $\varepsilon\downarrow0$. The same one-sided argument applies at other non-endpoint boundaries; the endpoint actions at $k=0$ have identical payoffs.

The testing interval expands as verification becomes cheaper because
\begin{equation}
\frac{\partial\mu_-(k)}{\partial k}=\frac{1}{c+\ell}>0,
\qquad
\frac{\partial\mu_+(k)}{\partial k}=-\frac{1}{v-c}<0.
\label{eq:testing-boundary-derivatives}
\end{equation}
At $k=\bar k$, the two boundaries coincide at $\widehat\mu$ and the no-verification design is recovered. At $k=0$, they are zero and one; full revelation is optimal, though not unique, and no positive trap activation can be sustained.

An appearance experiment is \term{Blackwell-more-informative} than another if the latter can be obtained by garbling the former's signal without conditioning on the state; the order is strict if the reverse garbling is impossible. Cheaper access to a private diagnostic can require the defender to choose a more informative public appearance experiment.

\begin{corollary}\label{cor:blackwell-order}
Fix $0\leq k_L<k_H<\bar k$ and prior $q$. At each cost, select an optimal two-posterior appearance experiment. The experiment at $k_L$ is strictly Blackwell-more-informative than the experiment at $k_H$ in each of the following cases:
\begin{enumerate}[label=(\roman*),leftmargin=2.2em,itemsep=0.2em,topsep=0.2em,parsep=0pt]
\item Both optimal supports have the bait form $\{\mu_-(k),1\}$, and $\mu_-(k_H)<q<1$.
\item Both optimal supports have the two-sided form $\{\mu_-(k),\mu_+(k)\}$, and $\mu_-(k_H)<q<\mu_+(k_H)$.
\item Both optimal supports have the protection form $\{0,\mu_+(k)\}$, and $0<q<\mu_+(k_H)$.
\end{enumerate}
Optimal experiments in different support regimes need not be Blackwell-comparable.
\end{corollary}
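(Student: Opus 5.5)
The plan is to reduce the Blackwell comparison to a comparison of posterior distributions and then use the monotonicity of the testing boundaries in \eqref{eq:testing-boundary-derivatives}. By the Blackwell--Sherman--Stein theorem, one experiment is Blackwell-more-informative than another exactly when its induced distribution over posteriors is a mean-preserving spread of the other's, that is, larger in the convex order.

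In the binary-state setting with a common prior $q$, each selected experiment induces a distribution supported on two posteriors $a\le q\le b$. By Bayes plausibility \eqref{eq:bayes-plausibility}, the weights on these posteriors are determined by the support. The first step is an elementary lemma. Let $\Pi$ have support $\{a,b\}$ and $\Pi'$ have support $\{a',b'\}$, with $a'<q<b'$ and both means equal to $q$. Then $\Pi$ dominates $\Pi'$ in the convex order if and only if $a\le a'$ and $b'\le b$.

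For sufficiency, I will split each interior point $a'$ and $b'$ into $\{a,b\}$ in the Bayes-plausible way. This exhibits $\Pi$ as a garbling-reversal, that is, a mean-preserving spread, of $\Pi'$. For necessity, I will test against convex functions. Take $f(\mu)=[a'-\mu]_+$: if $a>a'$, then $\E_\Pi f=0<\E_{\Pi'}f$, which is impossible under convex dominance. Symmetrically, take $[\mu-b']_+$ when $b<b'$.

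Strictness follows from antisymmetry of the convex order. Two distinct distributions cannot each be a mean-preserving spread of the other, since $\E[\,\cdot\,]$ of the functions $[\mu-t]_+$ for all $t$ determines the distribution. Hence whenever the containment $[a',b']\subseteq[a,b]$ is proper, the reverse garbling is impossible.

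Next, apply the lemma case by case. The derivatives in \eqref{eq:testing-boundary-derivatives} give $\mu_-(k_L)<\mu_-(k_H)$ and $\mu_+(k_L)>\mu_+(k_H)$ for $k_L<k_H<\bar k$.

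In case (i), the supports are $\{\mu_-(k_L),1\}$ and $\{\mu_-(k_H),1\}$. The condition $\mu_-(k_H)<q<1$ makes the $k_H$ experiment nondegenerate, and then $\mu_-(k_L)<q$ holds as well. The lower point strictly decreases while the upper point is fixed, so the lemma gives strict dominance.

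Case (ii) works the same way. Both endpoints move outward strictly, and the hypothesis $\mu_-(k_H)<q<\mu_+(k_H)$ ensures that both experiments are genuinely two-point.

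In case (iii), the supports are $\{0,\mu_+(k)\}$. The lower endpoint is fixed at $0$ and the upper endpoint satisfies $\mu_+(k_L)>\mu_+(k_H)>q$.

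In every case I will also check that the weights stated in \eqref{eq:attacked-pool-implementation} and its analogues are consistent with the common prior. This is automatic, because Bayes plausibility pins the weights down once the support is fixed.

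For the final sentence of the corollary, I will exhibit supports that cross rather than nest. Take a protection-form experiment $\{0,\mu_+(k_L)\}$ against a bait-form experiment $\{\mu_-(k_H),1\}$ with $0<\mu_-(k_H)<q<\mu_+(k_L)<1$. By the necessity half of the lemma, neither convex-order inequality holds, since each experiment has one endpoint strictly inside the other's support. Such a configuration can arise because Theorem~\ref{thm:optimal-design} permits regime changes between costs, for example across $\underline k$ or across the boundaries in $q$.

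I expect the main obstacle to be the necessity and strictness part of the lemma, together with the bookkeeping needed to confirm that the hypotheses on $q$ guarantee nondegenerate supports in every case. The spreading construction itself is routine.
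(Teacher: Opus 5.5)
Your argument for cases (i)--(iii) is correct and essentially the paper's: Blackwell order as convex order of posteriors, nested two-point supports, and the monotone boundaries in \eqref{eq:testing-boundary-derivatives}. Deriving strictness from antisymmetry of the convex order is a valid substitute for the paper's use of $f(\mu)=\mu^2$, whose expectation $q^2+(q-a)(b-q)$ rises strictly under proper nesting. The gap is in the final incomparability claim, and it has two parts.

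First, your necessity test functions do not separate the two distributions. Since $\Pi'$ is supported on $\{a',b'\}$, the function $[a'-\mu]_+$ vanishes at both support points. So $\E_{\Pi'}f=0$, and the asserted inequality $0<\E_{\Pi'}f$ fails; $[\mu-b']_+$ has the same defect. The kink must sit at the candidate dominator's endpoint, or strictly between the two endpoints. If $a>a'$, the function $[a-\mu]_+$ gives $\E_\Pi f=0<\Pi'(a')(a-a')=\E_{\Pi'}f$; if $b<b'$, use $[\mu-b]_+$.

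Second, and more substantively, the pair you use cannot arise from the model for fixed primitives. A protection-form optimum $\{0,\mu_+(k)\}$ with $0<k<\bar k$ occurs only when $B(v-c)\le L(c+\ell)$ (\Cref{prop:low-intelligence}). A bait-form optimum $\{\mu_-(k),1\}$ with $0<k<\bar k$ occurs only under \eqref{ass:high-intelligence}. Your justification also fails on its own terms. Crossing $\underline k$ in \Cref{thm:optimal-design} moves from the bait form to the two-sided form, not to the protection form. With $q$ fixed, crossing a boundary in $q$ only makes one experiment degenerate, and a degenerate experiment is comparable with everything.

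The claim concerns optimal experiments of the model, so you need a realisable crossing pair. Under \eqref{ass:high-intelligence}, take $0<k_L<\underline k<k_H<\bar k$ and $\mu_-(k_H)<q<\mu_+(k_L)$. The optimal supports $\{\mu_-(k_L),\mu_+(k_L)\}$ and $\{\mu_-(k_H),1\}$ then cross, because $\mu_-(k_L)<\mu_-(k_H)$ while $\mu_+(k_L)<1$. The corrected test functions $[\mu_-(k_H)-\mu]_+$ and $[\mu-\mu_+(k_L)]_+$ rule out dominance in each direction. This is exactly the paper's example, with $(v,c,\ell,B,L,q)=(3,1,1,4,1,1/2)$, $k_H=1/2$ and $k_L=3/10$. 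Under $B(v-c)<L(c+\ell)$, the protection form at $k_H\in[\tilde k,\bar k)$ against the two-sided form at $k_L\in(0,\tilde k)$, with $\mu_-(k_L)<q<\mu_+(k_H)$, also crosses.
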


\noindent The bait and two-sided forms arise in \Cref{thm:optimal-design}; the protection form is established below. Within each regime, the defender reveals more about the object's state as verification becomes cheaper, even though the diagnostic is not acquired on the equilibrium path. The comparison does not extend across regime changes.

The constant cost makes the testing boundaries linear but is not essential to the posterior-support result. Suppose instead that the perfect diagnostic has a continuous posterior-dependent cost $\kappa:[0,1]\rightarrow[0,\infty)$. Write
\begin{equation}
\phi_L(\mu)\triangleq\mu(c+\ell)-\kappa(\mu),
\qquad
\phi_R(\mu)\triangleq(1-\mu)(v-c)-\kappa(\mu)
\label{eq:posterior-cost-net-values}
\end{equation}
for the gains from testing relative to blind attack below $\widehat\mu$ and to abstention above it. Let $w_\kappa$ denote the induced defender payoff and $V_\kappa(q)\triangleq\cav w_\kappa(q)$.

\begin{proposition}\label{prop:posterior-dependent-costs}
Suppose \eqref{ass:high-intelligence} holds, $\phi_L$ is strictly increasing on $[0,\widehat\mu]$ and $\phi_R$ is strictly decreasing on $[\widehat\mu,1]$. If $\kappa(\widehat\mu)\geq\bar k$, testing is never strictly optimal and $V_\kappa(q)=V_\infty(q)$ from \eqref{eq:no-test-high-value}. If $\kappa(\widehat\mu)<\bar k$, testing is strictly optimal exactly on an interval $(a,b)$ whose unique boundaries satisfy
\begin{equation}
0\leq a<\widehat\mu<b\leq1,
\qquad
\kappa(a)=a(c+\ell),
\qquad
\kappa(b)=(1-b)(v-c).
\label{eq:posterior-cost-boundaries}
\end{equation}
If $a>\mu_0$, an optimal support is $\{a,1\}$ for $a<q<1$. If $a\leq\mu_0$, an optimal support is $\{a,b\}$ for $a<q<b$, with pooling below $a$ and abstention at or above $b$. Two such cost schedules ordered pointwise, $\kappa_H\geq\kappa_L$, have ordered values: $V_{\kappa_H}(q)\geq V_{\kappa_L}(q)$ for every prior.
\end{proposition}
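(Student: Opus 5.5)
The plan follows the proof of \Cref{thm:optimal-design} with the linear boundaries $\mu_\pm(k)$ replaced by $a$ and $b$. Throughout I use the maintained tie convention.

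\emph{Attacker behaviour.} The attacker's payoff from testing is $u_T=(1-\mu)(v-c)-\kappa(\mu)$. Testing beats blind attack iff $\phi_L(\mu)>0$ and beats abstention iff $\phi_R(\mu)>0$.

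On $[0,\widehat\mu]$ blind attack dominates abstention, so testing is strictly optimal iff $\phi_L>0$. On $[\widehat\mu,1]$ abstention dominates blind attack, so testing is strictly optimal iff $\phi_R>0$.

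At $\widehat\mu$ both gains equal $\bar k-\kappa(\widehat\mu)$, because $\widehat\mu(c+\ell)=(1-\widehat\mu)(v-c)=\bar k$.
\begin{itemize}
\item If $\kappa(\widehat\mu)\geq\bar k$, strict monotonicity makes $\phi_L\le0$ on $[0,\widehat\mu]$ and $\phi_R\le0$ on $[\widehat\mu,1]$. Testing is then never strictly optimal, so $w_\kappa=w_\infty$ and $V_\kappa=V_\infty$ by \Cref{prop:no-test-design}(i).
\item If $\kappa(\widehat\mu)<\bar k$, then $\phi_L(0)=-\kappa(0)\le0<\phi_L(\widehat\mu)$. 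Continuity and strict monotonicity give a unique $a\in[0,\widehat\mu)$ with $\phi_L>0$ exactly on $(a,\widehat\mu]$; the defender-preferred tie rule selects blind attack at $a$. Symmetrically, $\phi_R(1)=-\kappa(1)\le0$ gives a unique $b\in(\widehat\mu,1]$. These are the equations in \eqref{eq:posterior-cost-boundaries}.
\end{itemize}

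\emph{Defender payoff.} The posterior payoff $w_\kappa$ has the same shape as $w_k$ in \eqref{eq:posterior-payoff-testing} with $(\mu_-,\mu_+)$ replaced by $(a,b)$:
\begin{itemize}
\item $d(\mu)$ on $[0,a]$;
\item $-(1-\mu)L$ on $(a,b)$, which lies strictly below the chord through the neighbouring points;
\item $0$ on $[b,1]$.
\end{itemize}
Because the proof of \Cref{thm:optimal-design} uses only this three-piece geometry, I reuse its concavification argument.

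When $a>\mu_0$, we have $d(a)>0$. The line from $(a,d(a))$ to $(1,0)$ lies above the zero segment and, for $\mu<1$, above the testing branch $-(1-\mu)L$. Its slope $-d(a)/(1-a)$ is below $B+L$, so $d$ (slope $B+L$) stays below it to the left of $a$. Hence $\cav w_\kappa$ equals $d$ on $[0,a]$ and the chord on $[a,1]$, and the support is $\{a,1\}$.

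When $a\le\mu_0$, we have $d(a)\le0$, and the candidate envelope is the chord from $(a,d(a))$ to $(b,0)$, followed by $0$. This is the step I expect to be the main obstacle. Three facts need checking:
\begin{itemize}
\item The chord dominates $-(1-\mu)L$ on $(a,b)$. It matches $d(a)=-(1-a)L+aB\ge-(1-a)L$ at $a$ and is at least $-(1-b)L$ at $b$, so linearity gives the bound.
\item No point of $d$ to the left of $a$ lies above the chord's extension. The chord slope $-d(a)/(b-a)\le(B+L)$ requires $L-(B+L)a\le(B+L)(b-a)$, i.e.\ $b\ge\mu_0$, which holds since $b>\widehat\mu>\mu_0$.
\item Concavity at the kink $b$ needs chord slope $\ge0$, which is exactly $d(a)\le0$.
\end{itemize}
Together these give the support $\{a,b\}$ on $(a,b)$, pooling below $a$, and value $0$ at or above $b$.

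\emph{Cost comparison.} If $\kappa_H\ge\kappa_L$ pointwise, then $\phi_L^H\le\phi_L^L$ and $\phi_R^H\le\phi_R^L$. So the testing region shrinks: $a_H\ge a_L$ and $b_H\le b_L$.

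Compare $w_{\kappa_H}$ with $w_{\kappa_L}$ pointwise. Where the low-cost attacker tests, the high-cost attacker either tests, blindly attacks, or abstains. Blind attack occurs below $\widehat\mu$ and gives $d(\mu)=-(1-\mu)L+\mu B\ge-(1-\mu)L$. Abstention occurs above $\widehat\mu$ and gives $0\ge-(1-\mu)L$. Elsewhere the actions coincide. Thus $w_{\kappa_H}\ge w_{\kappa_L}$ pointwise, and monotonicity of the concave envelope gives $V_{\kappa_H}\ge V_{\kappa_L}$.
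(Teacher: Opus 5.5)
Your proposal is correct and follows essentially the same route as the paper. Single crossing of $\phi_L$ and $\phi_R$ pins down unique $a$ and $b$, the posterior payoff reduces to the three-piece form in the proof of \Cref{thm:optimal-design} with $(a,b)$ in place of $(\mu_-(k),\mu_+(k))$, the two concavification cases are separated by the sign of $d(a)$, and the value ordering follows from a pointwise ordering of posterior payoffs. Two small points should be stated explicitly: when $\kappa(\widehat\mu)=\bar k$, the three-way tie at $\widehat\mu$ is resolved toward blind attack because $d(\widehat\mu)>0$ (the paper says this explicitly, and it secures $w_\kappa=w_\infty$), and the testing branch lies only weakly below the chord, since it coincides with the chord when $a=0$ and $b=1$ (e.g.\ $\kappa\equiv0$); weak dominance is all your later argument uses.
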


\noindent Single crossing identifies the largest posterior sustaining blind attack and the smallest posterior sustaining abstention; their positions relative to $\mu_0$ determine the same optimal supports as in \Cref{thm:optimal-design}. Endogenous choice among diagnostics with posterior-independent costs also remains posterior-separable, as \Cref{prop:diagnostic-demand} establishes. By contrast, a shared precision investment, a cost that depends on the entire appearance experiment, or learning transferable across pools would couple posterior payoffs and require a different design problem.

The fixed-composition restriction also matters. If the defender can choose $q\in[0,1]$ at no cost before selecting the appearance experiment, \Cref{thm:optimal-design} implies
\begin{equation}
\arg\max_{q\in[0,1]}V_k(q)
=
\begin{cases}
\{\widehat\mu\},&k\geq\bar k,\\[3pt]
\{\mu_-(k)\},&\underline k<k<\bar k,\\[3pt]
[\mu_-(k),1],&k=\underline k,\\[3pt]
[\mu_+(k),1],&0\leq k<\underline k.
\end{cases}
\label{eq:free-prevalence-choice}
\end{equation}
Each set contains a prevalence at which pooling is optimal. In the first two regimes, the defender chooses the posterior that would otherwise form the attacked pool. When $k<\underline k$, it can instead choose enough traps to induce abstention. At $k=\underline k$, some other optimal prevalences require segmentation and do not themselves induce abstention when pooled. A non-degenerate experiment is nevertheless unnecessary if the defender is free to select any optimal prevalence. When $k>\underline k$, a non-decreasing deployment cost also cannot sustain $q>m_k$: gross value strictly falls above $m_k$, while deployment cost weakly rises.

Generating another false credential may be inexpensive even when changing an existing trap share is not: adjustment can require revoking active credentials, rotating keys, reassigning objects, revalidating coverage or coordinating across services. Let $q^{\mathrm I}$ denote inherited prevalence. Before choosing an appearance experiment, suppose the defender can choose $q$ by paying $\chi(q-q^{\mathrm I})^2/2$, where $\chi>0$. The joint problem is
\begin{equation}
\max_{q\in[0,1]}
\left\{V_k(q)-\frac{\chi}{2}(q-q^{\mathrm I})^2\right\}.
\label{eq:inherited-prevalence-problem}
\end{equation}

\begin{proposition}\label{prop:inherited-prevalence}
Suppose \eqref{ass:high-intelligence} holds, $k>\underline k$ and $q^{\mathrm I}\in(m_k,1]$, where $m_k$ is defined in \Cref{thm:optimal-design}. Let
\begin{equation}
s_k\triangleq\frac{d(m_k)}{1-m_k}>0.
\label{eq:segmentation-shadow-cost}
\end{equation}
The unique solution to \eqref{eq:inherited-prevalence-problem} is
\begin{equation}
q^*=\max\left\{m_k,\ q^{\mathrm I}-\frac{s_k}{\chi}\right\}.
\label{eq:inherited-prevalence-solution}
\end{equation}
Non-degenerate segmentation is optimal if and only if
\begin{equation}
\chi(q^{\mathrm I}-m_k)>s_k.
\label{eq:inherited-prevalence-condition}
\end{equation}
When this condition holds, the optimal appearance experiment assigns probability $(1-q^*)/(1-m_k)$ to the attacked posterior $m_k$ and probability $(q^*-m_k)/(1-m_k)$ to the avoided posterior $1$.
\end{proposition}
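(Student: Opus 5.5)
The plan is to use \Cref{thm:optimal-design} to write $V_k$ explicitly on $[0,1]$ when $k>\underline k$. In that case $n_k=1$, so
\[
V_k(q)=d(q)\text{ for }q\leq m_k,\qquad V_k(q)=\frac{1-q}{1-m_k}d(m_k)=s_k(1-q)\text{ for }m_k<q\leq1,
\]
with $V_k(1)=0$ consistent with the second formula.

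First I check $s_k>0$, i.e.\ $d(m_k)>0$. If $k\geq\bar k$, then $m_k=\widehat\mu$ and $d(\widehat\mu)>0$ by \eqref{ass:high-intelligence} and \eqref{eq:boundary-defender-payoff}. If $\underline k<k<\bar k$, then $m_k=\mu_-(k)>\mu_0$ and part (ii) of the theorem gives $d(m_k)>0$.

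Next, the objective $F(q)=V_k(q)-\tfrac{\chi}{2}(q-q^{\mathrm I})^2$ is strictly concave, because $V_k$ is concave (a concave envelope) and the penalty is strictly convex. Its maximiser is therefore unique, and it is characterised by $0\in\partial F(q^*)$ using superdifferentials. I then need the slopes of $V_k$ on each side of $m_k$:
\begin{itemize}
\item On $[0,m_k)$ the slope is $B+L$.
\item On $(m_k,1]$ the slope is $-s_k<0$.
\end{itemize}
So $V_k$ is strictly increasing up to $m_k$ and strictly decreasing after it.

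I first rule out $q<m_k$. Since $q^{\mathrm I}>m_k$, both $V_k$ and the penalty term strictly favour moving $q$ up toward $m_k$, so $F(q)<F(m_k)$ for all $q<m_k$. On $[m_k,1]$, $F$ is a smooth strictly concave quadratic with derivative $-s_k-\chi(q-q^{\mathrm I})$. This derivative vanishes at $\tilde q=q^{\mathrm I}-s_k/\chi<q^{\mathrm I}\leq1$, so the upper bound $q\leq1$ never binds.
\begin{itemize}
\item If $\tilde q>m_k$, then $q^*=\tilde q$.
\item Otherwise $F$ is decreasing on $[m_k,1]$, so $q^*=m_k$.
\end{itemize}
Together these give \eqref{eq:inherited-prevalence-solution}. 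At the kink, the superdifferential $[-s_k,B+L]-\chi(m_k-q^{\mathrm I})$ contains $0$ exactly when $\chi(q^{\mathrm I}-m_k)\leq s_k$, which confirms the corner case.

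Finally, segmentation is non-degenerate exactly when $m_k<q^*<1$. By \Cref{thm:optimal-design}, pooling at $q^*=m_k$ is attacked blindly and is degenerate. The upper inequality $q^*<1$ always holds because $\tilde q<1$. The lower inequality $q^*>m_k$ is equivalent to \eqref{eq:inherited-prevalence-condition}. The weights on the two posteriors follow from Bayes plausibility with support $\{m_k,1\}$: the weight $\pi$ on $m_k$ solves $\pi m_k+(1-\pi)=q^*$, giving $\pi=(1-q^*)/(1-m_k)$.

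The main obstacle is bookkeeping rather than any deep step. One point is to use $q^{\mathrm I}>m_k$ to exclude $q<m_k$, rather than relying on first-order conditions across the kink. The other is to verify that $d(m_k)>0$ in both cost subregimes, since that positivity is what makes the slope $-s_k$ strictly negative.
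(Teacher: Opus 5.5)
Your proposal is correct and follows essentially the same route as the paper. You write $V_k$ as a two-piece concave function with slopes $B+L$ and $-s_k$ (using $d(m_k)>0$), use strict concavity of the penalised objective, and compare one-sided derivatives at the kink $m_k$ to obtain either the corner $m_k$ or the interior solution $q^{\mathrm I}-s_k/\chi<q^{\mathrm I}\leq 1$, with the Bayes-plausible weights following exactly as in the paper.
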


\noindent The left side of \eqref{eq:inherited-prevalence-condition} is the marginal adjustment cost of reducing the inherited composition all the way to the attacked-pool boundary; $s_k$ is the marginal encounter gain from that reduction. Segmentation survives exactly when the adjustment cost is larger. The calculation does not explain the original stock of traps, but it identifies when an inherited stock remains relevant even though additional honeytokens can be produced cheaply.

The two encounter outcomes in \eqref{eq:outcome-definitions} need not move together. Let $G_k^*(q)$ and $T_k^*(q)$ denote genuine compromise and trap activation under the optimal experiment.

\begin{corollary}\label{cor:outcomes}
Suppose $\underline k<k<\bar k$ and $\mu_-(k)<q<1$. Under the optimal experiment,
\begin{equation}
G_k^*(q)=1-q,
\qquad
T_k^*(q)=\frac{(1-q)\mu_-(k)}{1-\mu_-(k)}.
\label{eq:optimal-outcomes}
\end{equation}
Consequently,
\begin{equation}
\frac{\partial G_k^*(q)}{\partial k}=0,
\qquad
\frac{\partial T_k^*(q)}{\partial k}
=\frac{1-q}{(c+\ell)[1-\mu_-(k)]^2}>0.
\label{eq:outcome-test-cost-derivatives}
\end{equation}
If $R_k(q)$ denotes the attacker's equilibrium expected payoff, the two players' payoffs are
\begin{equation}
\begin{aligned}
V_k(q)
&=(1-q)\left[\frac{Bk}{c+\ell-k}-L\right],\\
R_k(q)
&=(1-q)\left[v-c-\frac{(c+\ell)k}{c+\ell-k}\right].
\end{aligned}
\label{eq:strategic-equilibrium-payoffs}
\end{equation}
Within this regime,
\begin{equation}
\begin{gathered}
\frac{\partial V_k(q)}{\partial k}
=\frac{(1-q)B(c+\ell)}{(c+\ell-k)^2}>0,
\qquad
\frac{\partial R_k(q)}{\partial k}
=-\frac{(1-q)(c+\ell)^2}{(c+\ell-k)^2}<0,\\
\frac{\partial^2V_k(q)}{\partial q\,\partial k}
=-\frac{B(c+\ell)}{(c+\ell-k)^2}<0.
\end{gathered}
\label{eq:strategic-payoff-derivatives}
\end{equation}
All derivatives are local to the stated strict regime and track the same optimal-support form as costs or prevalence vary.
\end{corollary}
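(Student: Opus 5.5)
The plan is to read the optimal experiment off \Cref{thm:optimal-design}(ii) and then compute everything directly. In the regime $\underline k<k<\bar k$ with $\mu_-(k)<q<1$, we have $m_k=\mu_-(k)$ and $n_k=1$. The optimal support is therefore $\{\mu_-(k),1\}$, implemented as in \eqref{eq:attacked-pool-implementation}: every genuine object and a fraction of traps receive the attacked appearance $s^{\mathrm A}$, and the pool is attacked blindly under the tie convention. The avoided pool at posterior $1$ contains only traps and is not attacked. Bayes plausibility gives the weight on the attacked pool as $\pi=(1-q)/(1-\mu_-(k))$.

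The outcome formulas follow at once. Since every genuine object lies in the attacked pool and is attacked, $G_k^*(q)=1-q$. Trap activation equals the mass of traps in the attacked pool, $T_k^*(q)=\pi\mu_-(k)=(1-q)\mu_-(k)/(1-\mu_-(k))$. Using $\mu_-(k)=k/(c+\ell)$ yields
\[
\frac{\mu_-(k)}{1-\mu_-(k)}=\frac{k}{c+\ell-k}.
\]
This is well defined because $k<\bar k<c+\ell$, and it gives $T_k^*$ and its $k$-derivative by the quotient rule. The derivative of $G$ is zero trivially.

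For the payoffs, I will use $V_k(q)=BT-LG$, which matches \eqref{eq:value-regimes} as $\pi d(\mu_-(k))$, and substitute to get $(1-q)[Bk/(c+\ell-k)-L]$. The attacker's payoff is zero on the avoided pool. On the attacked pool it is $\pi u_B(\mu_-(k))$. I will expand this as $(1-q)(v-c)-T(c+\ell)$, since genuine attacks yield $v-c$ and each trap attack costs $c+\ell$. This gives the stated expression for $R_k$. The derivatives in \eqref{eq:strategic-payoff-derivatives} then come from
\[
\frac{\partial}{\partial k}\,\frac{k}{c+\ell-k}=\frac{c+\ell}{(c+\ell-k)^2},
\]
and the cross-partial follows because $V_k$ is linear in $q$ with slope $-[Bk/(c+\ell-k)-L]$.

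The only delicate point is justifying that these derivatives are local. The regime conditions $\underline k<k<\bar k$ and $\mu_-(k)<q<1$ are strict inequalities, and $\mu_-$ is continuous in $k$. A neighbourhood of $(k,q)$ therefore stays inside the regime, the same support form $\{\mu_-(k),1\}$ applies throughout it, and the closed forms can be differentiated. I will also note that the sign claims rely on $q<1$ and $B,c+\ell>0$.
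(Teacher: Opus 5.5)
Your proposal is correct and follows essentially the same route as the paper: you read off the support $\{\mu_-(k),1\}$ from \Cref{thm:optimal-design}(ii) with attacked weight $(1-q)/(1-\mu_-(k))$, compute $G$, $T$, $V_k=BT-LG$ and $R_k=\lambda u_B(\mu_-(k))$, and then differentiate using $\mu_-(k)=k/(c+\ell)$. Your added remarks on $\bar k<c+\ell$ and on the openness of the strict regime are harmless refinements of what the paper states in the corollary itself.
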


\noindent All genuine objects enter the attacked pool in the intermediate-cost regime. Cheaper fingerprinting forces the defender to place fewer traps alongside those objects while leaving their exposure unchanged. It therefore benefits the attacker and harms the defender even though no diagnostic is acquired on path. If hardening effort $z$ increases verification cost, $k'(z)>0$, the negative cross-partial also means greater installed prevalence reduces the marginal gross return to hardening within this regime. Trap activations fall even though genuine compromise does not, so a lower alert rate need not indicate stronger protection.

When $B(v-c)<L(c+\ell)$, the no-verification optimum instead uses traps to protect genuine objects. At equality, a protective implementation remains optimal, although other implementations can attain the same value. If $k<\bar k$, fingerprinting makes a pool at $\widehat\mu$ ineffective for protection: the attacker tests it and recovers every genuine object. To deter the test, a protected pool must have trap share at least $\mu_+(k)>\widehat\mu$. Define
\begin{equation}
\tilde k\triangleq(v-c)(1-\mu_0)
=\frac{(v-c)B}{B+L},
\label{eq:protection-threshold}
\end{equation}
at which $\mu_+(k)$ reaches the defender's zero-payoff posterior $\mu_0$.

\begin{proposition}\label{prop:low-intelligence}
Suppose $B(v-c)\leq L(c+\ell)$. Then $0<\tilde k\leq\bar k$, and:
\begin{enumerate}[label=(\roman*),leftmargin=2.2em,itemsep=0.2em,topsep=0.2em,parsep=0pt]
\item If $k\geq\bar k$, then $V_k(q)=V_\infty(q)$ from \eqref{eq:no-test-low-value}; for $0<q<\widehat\mu$, an optimal support is $\{0,\widehat\mu\}$.
\item If $\tilde k\leq k<\bar k$, then
\begin{equation}
V_k(q)=
\begin{cases}
\displaystyle-L\left(1-\frac{q}{\mu_+(k)}\right),&q<\mu_+(k),\\[6pt]
0,&q\geq\mu_+(k),
\end{cases}
\label{eq:low-intelligence-value}
\end{equation}
and for $0<q<\mu_+(k)$ one optimal support is $\{0,\mu_+(k)\}$.
\item If $0\leq k<\tilde k$, then $V_k(q)$ equals \eqref{eq:value-regimes} with $m_k=\mu_-(k)$ and $n_k=\mu_+(k)$; for $\mu_-(k)<q<\mu_+(k)$, an optimal support is $\{\mu_-(k),\mu_+(k)\}$.
\end{enumerate}
\end{proposition}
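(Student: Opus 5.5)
The plan is to compute the defender's induced posterior payoff $w_k$ explicitly and then read off its least concave majorant, using \eqref{eq:concavification}. The parameter comparisons come first. Under $B(v-c)\leq L(c+\ell)$, \eqref{eq:boundary-defender-payoff} gives $d(\widehat\mu)\leq0$, hence $\widehat\mu\leq\mu_0$. Since $\tilde k=(v-c)(1-\mu_0)$ and $\bar k=(v-c)(1-\widehat\mu)$ by \eqref{eq:maximum-test-value}, this yields $0<\tilde k\leq\bar k$. It also gives the key equivalence
\[
\mu_+(k)\leq\mu_0\iff k\geq\tilde k,
\]
which follows directly from \eqref{eq:testing-boundaries}.

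For part (i), when $k\geq\bar k$ testing is never strictly optimal, so $w_k$ coincides with the no-diagnostic posterior payoff. The tie convention at $\widehat\mu$ selects abstention there, as in the no-test model. The claim is then exactly \Cref{prop:no-test-design}(ii).

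For $k<\bar k$, I would use \eqref{eq:action-regions} and \eqref{eq:posterior-payoff-testing}. These give $w_k=d$ on $[0,\mu_-(k)]$, $w_k(\mu)=-(1-\mu)L$ on $(\mu_-(k),\mu_+(k))$, and $w_k=0$ on $[\mu_+(k),1]$. The central observation is that the testing branch lies on the line $\lambda(\mu)\triangleq-(1-\mu)L$. This line passes through $(0,-L)$ and satisfies $d(\mu)=\lambda(\mu)+\mu B\geq\lambda(\mu)$. The only candidate vertices of the envelope are therefore $(0,-L)$, $(\mu_-(k),d(\mu_-(k)))$, and $(\mu_+(k),0)$, together with the flat segment at zero on $[\mu_+(k),1]$.

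Next I compare the blind branch with the chord $C$ from $(0,-L)$ to $(\mu_+(k),0)$, which has slope $L/\mu_+(k)$. Both $d$ and $C$ start at $-L$, with slopes $B+L$ and $L/\mu_+(k)$ respectively. So $d$ lies weakly below $C$ on $(0,\mu_+(k)]$ if and only if $\mu_+(k)\leq\mu_0$, that is, if and only if $k\geq\tilde k$.

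In case (ii), I claim the function equal to $C$ on $[0,\mu_+(k)]$ and to $0$ afterwards is the concave envelope. It is concave, since its slopes are $L/\mu_+(k)\geq0$ followed by $0$. It majorises $w_k$ for three reasons. First, the blind branch lies below $C$. Second, $\lambda$ has slope $L<L/\mu_+(k)$ and starts at the same point $(0,-L)$, so it lies below $C$. Third, $w_k=0$ on the top region. Finally, it agrees with $w_k$ at $0$, at $\mu_+(k)$, and on $[\mu_+(k),1]$. Any concave majorant must lie above the chord through these touching points, so this function is the least concave majorant. This gives \eqref{eq:low-intelligence-value} and the support $\{0,\mu_+(k)\}$.

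In case (iii), with $k<\tilde k$, the candidate envelope is $d$ on $[0,\mu_-(k)]$, then the chord from $(\mu_-(k),d(\mu_-(k)))$ to $(\mu_+(k),0)$, then $0$. To check concavity I need the slope sequence $B+L\geq -d(\mu_-(k))/(\mu_+(k)-\mu_-(k))\geq0$. The second inequality holds because $\mu_-(k)<\widehat\mu\leq\mu_0$ gives $d(\mu_-(k))\leq0$. The first holds because $d$ lies strictly above $C$, so the point $(\mu_-(k),d(\mu_-(k)))$ is above $C$ and the chord to $(\mu_+(k),0)$ is flatter than $d$. For majorisation on the testing interval, $\lambda$ is linear with $\lambda(\mu_-(k))\leq d(\mu_-(k))$ and $\lambda(\mu_+(k))<0$, so it lies below the chord. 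This reproduces \eqref{eq:value-regimes} with $m_k=\mu_-(k)$ and $n_k=\mu_+(k)$.

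I expect the main obstacle to be the boundary bookkeeping rather than the geometry. That bookkeeping covers four points. First, upper semicontinuity of $w_k$ at $\mu_\pm(k)$ under the defender-preferred tie rule. Second, the degenerate cost $k=0$, where $\mu_-=0$ and $\mu_+=1$, so the endpoint conventions apply. Third, the knife-edge $k=\tilde k$, where $d$ and $C$ coincide and the displayed support is only one of several optima. Fourth, confirming that the envelope actually attains at the support points. For each of these I would invoke the same one-sided $\varepsilon$-perturbation argument that the paper uses after \Cref{thm:optimal-design}.
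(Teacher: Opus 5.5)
Your proposal is correct and follows essentially the paper's route. You use the same dividing condition $\mu_+(k)\leq\mu_0\iff k\geq\tilde k$ and the same piecewise-linear candidate majorants in cases (ii) and (iii). You verify them the same way: ordered slopes, endpoint comparisons with the testing line $-(1-\mu)L$, and the chord bound for minimality. Your geometric slope argument in case (iii) is just the paper's equivalence $-d(\mu_-(k))/(\mu_+(k)-\mu_-(k))\leq B+L\iff\mu_0\leq\mu_+(k)$.

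The $\varepsilon$-perturbation you plan for the boundary bookkeeping is unnecessary. Under the defender-preferred tie rule, $w_k$ is upper semicontinuous and equals $d(\mu_-(k))$ and $0$ exactly at the support points. The knife-edge $k=\tilde k$ is already covered by the weak inequality in your case (ii). At $k=0$ you only need the weak bound $-(1-\mu_+(k))L\leq0$ rather than your strict one.
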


\noindent Verification changes both uses of traps, but in opposite directions. Profitable bait must contain fewer traps than participation alone would permit; a protected pool must contain more traps than participation alone would require. As verification becomes cheaper, protection absorbs more of the installed traps and leaves additional genuine objects exposed. For $0<k<\tilde k$, it becomes worthwhile to include some traps in the attacked pool, producing the same two-point support as in the high-intelligence cheap-verification regime; at $k=0$, that attacked pool contains only genuine objects. At $k=\tilde k$, multiple supports can be optimal and can imply different physical outcomes.

In either intelligence regime, suppose the selected optimal support has $m=\mu_-(k)<q<\mu_+(k)=n$. Its attacked posterior receives probability $(n-q)/(n-m)$, giving
\begin{equation}
G_k^*(q)=\frac{(n-q)(1-m)}{n-m},
\qquad
T_k^*(q)=\frac{(n-q)m}{n-m}.
\label{eq:two-sided-optimal-outcomes}
\end{equation}
These outcomes describe the selected two-sided implementation; at a regime boundary, other optimal implementations can differ.

In the strict protection regime, suppose $B(v-c)\leq L(c+\ell)$, $\tilde k<k<\bar k$ and $0<q<\mu_+(k)$. The genuine-only attacked pool has probability $1-q/\mu_+(k)$, so
\begin{equation}
G_k^*(q)=1-\frac{q}{\mu_+(k)},
\qquad
T_k^*(q)=0,
\qquad
\frac{\partial G_k^*(q)}{\partial k}
=-\frac{q}{(v-c)\mu_+(k)^2}<0.
\label{eq:protection-outcomes}
\end{equation}
The strict inequality $\tilde k<k$ is necessary: at $k=\tilde k$, distinct optimal experiments can generate different compromise and activation rates, and an ordinary cost derivative need not exist. Within the strict protection regime, activation is already zero while cheaper verification increases genuine compromise. Thus neither a zero alert rate nor a declining one identifies the security outcome without the deployment regime.

\section{Repeated use and declining verification costs}\label{sec:operations}

Let $a\in\mathbb N_0\triangleq\{0,1,2,\ldots\}$ denote implementation age and suppose the effective fingerprinting cost follows an exogenous, publicly known path
\begin{equation}
0\leq k_{a+1}\leq k_a.
\label{eq:learning-path}
\end{equation}
The path may summarise accumulated probes or observations shared across attackers, but the model neither derives that learning nor allows current actions to affect future costs. Each encounter draws a new candidate from a replenished population with the same installed trap share; neither compromise nor activation depletes that population or changes subsequent payoffs. At each age, the defender can redesign the appearance experiment while preserving installed prevalence, giving
\begin{equation}
U_a(q)\triangleq V_{k_a}(q).
\label{eq:age-value}
\end{equation}
This is the value of an adaptive appearance policy, not of an appearance fixed throughout the implementation's life.

A higher perfect-verification cost can only replace testing with blind attack below $\widehat\mu$ or with abstention above it. Those changes raise the defender's posterior payoff by $\mu B\geq0$ and $(1-\mu)L\geq0$, respectively. Concavification preserves the resulting pointwise order, so
\begin{equation}
k'\geq k\quad\Longrightarrow\quad V_{k'}(q)\geq V_k(q)
\quad\text{for every }q\in[0,1].
\label{eq:value-monotonicity}
\end{equation}
Consequently, $U_{a+1}(q)\leq U_a(q)$ for every age. If $\rho\in(0,1)$ is a discount factor, comparison with a benchmark facing an inexperienced attacker in every encounter gives
\begin{equation}
W^{\mathrm{share}}(q)
\triangleq\sum_{a=0}^{\infty}\rho^aV_{k_a}(q)
\leq\frac{V_{k_0}(q)}{1-\rho}
\triangleq W^{\mathrm{fresh}}(q).
\label{eq:fresh-benchmark}
\end{equation}
The inequality is strict exactly when $V_{k_a}(q)<V_{k_0}(q)$ for some $a\geq1$.

If \eqref{ass:high-intelligence} holds, $\widehat\mu<q<1$ and $\underline k<k_a<\bar k$ at every age, let $G_a^*(q)\triangleq G_{k_a}^*(q)$ and $T_a^*(q)\triangleq T_{k_a}^*(q)$. Then \Cref{cor:outcomes} gives
\begin{equation}
G_a^*(q)=1-q,
\qquad
T_a^*(q)=(1-q)\frac{k_a}{c+\ell-k_a}.
\label{eq:age-optimal-outcomes}
\end{equation}
Thus the persistent and fresh-attacker benchmarks have identical discounted genuine compromise, while the persistent implementation has weakly lower discounted trap activation. Their value gap is exactly $B$ times the activation gap. Repeated use can therefore reduce intelligence production without reducing exposure of genuine resources. These are conditional-on-arrival comparative statics, not an identification result: changes in attacker arrivals, deterrence, unmonitored attack paths or legitimate-user contact can generate the same alert pattern.

Age-by-age redesign limits this loss. For an appearance experiment $\Pi_0$ fixed at deployment, the corresponding period payoff satisfies
\begin{equation}
\E_{\Pi_0}[w_{k_a}(\mu)]\leq V_{k_a}(q).
\label{eq:fixed-appearance-bound}
\end{equation}
In particular, suppose $\underline k<k_0<\bar k$, $\mu_-(k_0)<q<1$, and $\Pi_0$ initially uses the optimal support $\{\mu_-(k_0),1\}$. Any strict decline $k_a<k_0$ places the previously attacked posterior inside the new testing interval. The attacker then verifies on path, all genuine objects remain compromised and trap activations fall to zero, giving period payoff $-(1-q)L$. The optimised age profile in \eqref{eq:fresh-benchmark} is therefore an upper bound on the performance of a fixed appearance design. With age-by-age redesign, a cost path that crosses $\underline k$ need not produce a common sequence: the installed prior can instead imply pooling, protected segmentation or abstention. Redesign, replacement and endogenous learning require additional technology and cost assumptions and are not optimised here.

\section{Conclusion}\label{sec:conclusion}

A defender that values engagement with traps faces a different persuasion problem from a sender that always seeks acceptance. Without verification, valuable intelligence places an attacked pool at the attacker's participation boundary. That posterior also maximises the value of every informative diagnostic and every profitable posterior-independent diagnostic menu.

With perfect covert verification, cheaper testing forces profitable bait to contain fewer traps. When no attack-compatible posterior remains profitable, required deployment and inherited composition determine whether objects are pooled, divided between exposed and protected pools, or avoided. Imperfect verification preserves the common diagnostic-demand maximum but can induce testing on path and change the optimal posterior support. Heterogeneous attacker values would create several participation boundaries, search across candidates would introduce continuation payoffs, and limited commitment would change the feasible information structure; these extensions are outside the single-candidate characterisation.

If verification costs decline across encounters, even an adaptively redesigned implementation weakly depreciates. Genuine compromise can remain unchanged as trap activation falls; a fixed appearance can lose all activation once its attacked pool induces verification. Alert counts must therefore be interpreted alongside genuine compromise, deployment age and adversarial learning.

\newpage
\appendix
\section{Appendix: Proofs}\label{app:proofs}

Several proofs use the following direct implication of concavity. If $g$ is a concave majorant of $w$, $x_1<x_2$ and $\mu\in(x_1,x_2)$, then
\begin{equation}
g(\mu)\geq\frac{x_2-\mu}{x_2-x_1}w(x_1)
+\frac{\mu-x_1}{x_2-x_1}w(x_2).
\label{eq:concave-chord-bound}
\end{equation}
Thus a concave majorant that coincides with the chord between two realised posterior-payoff points is minimal on that interval.

\subsection{Proposition \ref{prop:no-test-design}}

\begin{proof}
Without fingerprinting, the defender's posterior payoff is
\[
w_\infty(\mu)=
\begin{cases}
d(\mu),&\mu<\widehat\mu,\\
\max\{d(\widehat\mu),0\},&\mu=\widehat\mu,\\
0,&\mu>\widehat\mu.
\end{cases}
\]
If $d(\widehat\mu)>0$, define
\[
h(\mu)=
\begin{cases}
d(\mu),&\mu\leq\widehat\mu,\\[3pt]
\displaystyle\frac{1-\mu}{1-\widehat\mu}d(\widehat\mu),&\mu>\widehat\mu.
\end{cases}
\]
Its slopes are $B+L>0$ and $-d(\widehat\mu)/(1-\widehat\mu)<0$, so it is concave. It coincides with $w_\infty$ below $\widehat\mu$ and is non-negative above, where $w_\infty=0$. Every concave majorant lies above $d$ below $\widehat\mu$ and, by \eqref{eq:concave-chord-bound}, above the chord joining $(\widehat\mu,d(\widehat\mu))$ and $(1,0)$. Hence $h=\cav w_\infty$. For $\widehat\mu<q<1$, assigning weight $(1-q)/(1-\widehat\mu)$ to $\widehat\mu$ and the remaining weight to $1$ implements the value.

If $d(\widehat\mu)\leq0$, define instead
\[
h(\mu)=
\begin{cases}
\displaystyle-L\left(1-\frac{\mu}{\widehat\mu}\right),&\mu<\widehat\mu,\\[6pt]
0,&\mu\geq\widehat\mu.
\end{cases}
\]
Its slopes are $L/\widehat\mu>0$ and zero, so it is concave. Below $\widehat\mu$,
\[
h(\mu)-d(\mu)
=\mu\left[\frac{L}{\widehat\mu}-(B+L)\right]\geq0,
\]
because $d(\widehat\mu)\leq0$ is equivalent to $(B+L)\widehat\mu\leq L$. At and above $\widehat\mu$, both the candidate and the posterior payoff are zero. The chord between $(0,-L)$ and $(\widehat\mu,0)$ gives minimality by \eqref{eq:concave-chord-bound}. Weight $q/\widehat\mu$ on $\widehat\mu$ and the remaining weight on zero implements the value for $0<q<\widehat\mu$. When $d(\widehat\mu)=0$ and $0<q<\widehat\mu$, pooling and protected segmentation can both be optimal.
\end{proof}

\subsection{Proposition \ref{prop:diagnostic-demand}}

\begin{proof}
Write $A\triangleq v-c>0$ and $H\triangleq c+\ell>0$. Integrating \eqref{eq:diagnostic-signal-contribution} gives
\begin{equation}
\int_{\mathcal Z}X(z,\mu)\,d\nu(z)
=(1-\mu)A-\mu H=u_B(\mu).
\label{eq:diagnostic-total-payoff}
\end{equation}
All integrals exist because $p_G$ and $p_D$ are non-negative and each integrates to one. Since $[x]_+-x=[-x]_+$, \eqref{eq:diagnostic-information-value} becomes
\begin{equation}
\mathcal I_P(\mu)=
\begin{cases}
\displaystyle\int_{\mathcal Z}[-X(z,\mu)]_+\,d\nu(z),
&\mu\leq\widehat\mu,\\[8pt]
\displaystyle\int_{\mathcal Z}[X(z,\mu)]_+\,d\nu(z),
&\mu\geq\widehat\mu.
\end{cases}
\label{eq:diagnostic-branch-representation}
\end{equation}
For each $z$, $-X(z,\mu)$ is weakly increasing in $\mu$ with slope $Ap_G(z)+Hp_D(z)\geq0$, and $X(z,\mu)$ is weakly decreasing with the opposite slope. The positive-part operator preserves both orders. Integrating establishes the two monotonicity claims. Moreover,
\[
\bigl|[X(z,\mu)]_+-[X(z,\mu')]_+\bigr|
\leq|\mu-\mu'|\,[Ap_G(z)+Hp_D(z)].
\]
The same inequality holds with $-X$ in place of $X$. Since the dominating function has integral $A+H<\infty$, each branch in \eqref{eq:diagnostic-branch-representation} is $(A+H)$-Lipschitz. The branches agree at $\widehat\mu$, so the same Lipschitz bound holds on all of $[0,1]$. Each branch is also convex because it is an integral of positive parts of affine functions.

At $\widehat\mu=A/(A+H)$,
\begin{equation}
X(z,\widehat\mu)
=\frac{AH}{A+H}[p_G(z)-p_D(z)]
=\bar k[p_G(z)-p_D(z)].
\label{eq:diagnostic-boundary-density}
\end{equation}
The densities have equal integrals, so the positive and negative parts of $p_G-p_D$ have equal total mass. Therefore
\[
\mathcal I_P(\widehat\mu)
=\bar k\int_{\mathcal Z}[p_G-p_D]_+\,d\nu
=\frac{\bar k}{2}\int_{\mathcal Z}|p_G-p_D|\,d\nu,
\]
which proves \eqref{eq:diagnostic-information-peak}. This expression is zero exactly when $P_G=P_D$. In that case $p_G=p_D$ almost everywhere, $X(z,\mu)=p_G(z)u_B(\mu)$, and \eqref{eq:diagnostic-information-value} gives $\mathcal I_P(\mu)=0$ at every posterior.

Equation \eqref{eq:diagnostic-branch-representation} also gives $\mathcal I_P(0)=\mathcal I_P(1)=0$. Below $\widehat\mu$, convexity bounds $\mathcal I_P$ by the chord joining zero to $\mathcal I_P(\widehat\mu)$; above $\widehat\mu$, it bounds $\mathcal I_P$ by the chord joining $\mathcal I_P(\widehat\mu)$ to zero. By \eqref{eq:fingerprinting-tent}, those two chord factors are exactly $\mathcal I(\mu)/\bar k$. Therefore
\[
0\leq\mathcal I_P(\mu)
\leq\frac{\mathcal I(\mu)}{\bar k}\mathcal I_P(\widehat\mu)
=\operatorname{TV}(P_G,P_D)\mathcal I(\mu).
\]
If $P_G\neq P_D$, then $\mathcal I_P(\widehat\mu)>0$ and $\mathcal I(\mu)/\bar k<1$ for every $\mu\neq\widehat\mu$, so the maximum is unique.

For the diagnostic menu, every $\mathcal I_{P^j}-k_j$ has the same weak monotonicity and convexity on each side of $\widehat\mu$. Taking a pointwise supremum and then the maximum with zero preserves those properties. Moreover, the Lipschitz bound $A+H$ is independent of the diagnostic, so
\[
\sup_{j\in\mathcal J}[\mathcal I_{P^j}(\mu)-k_j]
\leq\sup_{j\in\mathcal J}[\mathcal I_{P^j}(\mu')-k_j]
+(A+H)|\mu-\mu'|.
\]
The reverse inequality follows by interchanging $\mu$ and $\mu'$, and taking the positive part preserves the same Lipschitz bound. Hence $\mathcal S$ is continuous even for an infinite menu. Since every $k_j\geq0$, $\mathcal S(0)=\mathcal S(1)=0$. Applying the same two convex chord bounds to $\mathcal S$ proves \eqref{eq:diagnostic-menu-bound}, with the peak formula following from \eqref{eq:diagnostic-information-peak}. If $\mathcal S(\widehat\mu)>0$, then $\mathcal I(\mu)/\bar k<1$ off the participation boundary gives uniqueness. Moreover, $\mathcal S(\mu)>0$ holds exactly when at least one diagnostic has strictly positive net value; continuity and weak monotonicity therefore make that set an open interval containing $\widehat\mu$. If the supremum is attained at such a posterior, an optimal diagnostic is acquired; a finite menu always attains its supremum. If $\mathcal S(\widehat\mu)=0$, non-negativity and \eqref{eq:diagnostic-menu-bound} imply $\mathcal S(\mu)=0$ everywhere.
\end{proof}

\subsection{Theorem \ref{thm:optimal-design}}

\begin{proof}
Under \eqref{ass:high-intelligence},
\[
\mu_0<\widehat\mu
\ \Longleftrightarrow\ 
L(c+\ell)<B(v-c),
\]
and multiplication by $c+\ell>0$ gives $\underline k<\bar k$. If $k\geq\bar k$, testing is never strictly preferred. At $k=\bar k$, its only possible tie is at $\widehat\mu$, where the defender-preferred action is blind attack because $d(\widehat\mu)>0$. Thus the posterior payoff equals the no-verification payoff, and part (i) of \Cref{prop:no-test-design} proves the first regime.

Suppose next that $k<\bar k$, and write $m\triangleq\mu_-(k)$ and $n\triangleq\mu_+(k)$. Equations \eqref{eq:action-regions} and \eqref{eq:posterior-payoff-testing} give
\begin{equation}
w_k(\mu)=
\begin{cases}
d(\mu),&0\leq\mu\leq m,\\
-(1-\mu)L,&m<\mu<n,\\
0,&n\leq\mu\leq1.
\end{cases}
\label{eq:proof-pointwise-payoff}
\end{equation}

If $\underline k<k<\bar k$, then $m>\mu_0$ and $d(m)>0$. Define
\[
h(\mu)=
\begin{cases}
d(\mu),&\mu\leq m,\\[3pt]
\displaystyle\frac{1-\mu}{1-m}d(m),&\mu>m.
\end{cases}
\]
Its slopes are $B+L>0$ and $-d(m)/(1-m)<0$, so it is concave. It equals $w_k$ below $m$ and is non-negative above $m$, whereas testing and abstention give the defender non-positive payoffs. Hence $h\geq w_k$. Every concave majorant must lie above the chord joining $(m,d(m))$ and $(1,0)$ by \eqref{eq:concave-chord-bound}; below $m$ it must lie above $d$. Thus $h=\cav w_k$. For $m<q<1$, weights $(1-q)/(1-m)$ on $m$ and $(q-m)/(1-m)$ on $1$ have mean $q$ and implement $h(q)$.

If $0\leq k\leq\underline k$, then $m\leq\mu_0<n$ and $d(m)\leq0$. Define
\[
h(\mu)=
\begin{cases}
d(\mu),&\mu\leq m,\\[3pt]
\displaystyle\frac{n-\mu}{n-m}d(m),&m<\mu<n,\\[6pt]
0,&\mu\geq n.
\end{cases}
\]
Its three slopes are $B+L$, $-d(m)/(n-m)$ and zero. The middle slope is non-negative, and
\[
\frac{-d(m)}{n-m}\leq B+L
\ \Longleftrightarrow\ 
L\leq(B+L)n
\ \Longleftrightarrow\ 
\mu_0\leq n,
\]
which holds because $n>\widehat\mu>\mu_0$. Thus the slopes are weakly decreasing and $h$ is concave. It coincides with $w_k$ outside $(m,n)$. On that interval, the difference between the affine chord and the affine testing payoff is non-negative at both endpoints:
\[
h(m)+(1-m)L=mB\geq0,
\qquad
h(n)+(1-n)L=(1-n)L\geq0.
\]
It is therefore non-negative throughout, so $h\geq w_k$. The chord joining $(m,d(m))$ and $(n,0)$ gives minimality by \eqref{eq:concave-chord-bound}. For $m<q<n$, weights $(n-q)/(n-m)$ on $m$ and $(q-m)/(n-m)$ on $n$ have mean $q$ and implement $h(q)$. Pooling implements the exterior branches, establishing \eqref{eq:value-regimes} in every regime.
\end{proof}

\subsection{Corollary \ref{cor:blackwell-order}}

\begin{proof}
At a common non-degenerate prior, Blackwell dominance between binary-state appearance experiments is equivalent to convex ordering of their induced posterior distributions: the more informative experiment has weakly higher expected value for every convex function of the posterior. For a two-point posterior distribution with support $\{a,b\}$ and $a<q<b$, the expectation of any convex function $f$ is
\[
\frac{b-q}{b-a}f(a)+\frac{q-a}{b-a}f(b).
\]
If $[a_H,b_H]\subseteq[a_L,b_L]$, convexity bounds $f(a_H)$ and $f(b_H)$ above by their respective chords between $a_L$ and $b_L$. Averaging those bounds with the weights for mean $q$ gives the required convex-order inequality. Moreover, for $f(\mu)=\mu^2$ the expectation equals
\[
q^2+(q-a)(b-q),
\]
so the inequality is strict whenever the interval inclusion is proper and $q$ is interior.

Since $k_L<k_H$, \eqref{eq:testing-boundaries} gives $\mu_-(k_L)<\mu_-(k_H)$ and $\mu_+(k_L)>\mu_+(k_H)$. The respective support intervals in the three cases therefore satisfy
\[
[\mu_-(k_H),1]\subsetneq[\mu_-(k_L),1],
\quad
[\mu_-(k_H),\mu_+(k_H)]\subsetneq
[\mu_-(k_L),\mu_+(k_L)],
\quad
[0,\mu_+(k_H)]\subsetneq[0,\mu_+(k_L)].
\]
The stated prior restrictions make both distributions non-degenerate, including when $k_L=0$. Hence each comparison is strictly Blackwell ordered.

For failure across regimes, take $(v,c,\ell,B,L,q)=(3,1,1,4,1,1/2)$, for which $\underline k=2/5$ and $\bar k=1$. At $k_H=1/2$, the optimal posterior distribution assigns probabilities $2/3$ and $1/3$ to $1/4$ and $1$. At $k_L=3/10$, it assigns probability $1/2$ to each of $3/20$ and $17/20$. For the convex functions $f_t(\mu)=[\mu-t]_+$,
\[
\E_L[f_{1/5}]=\frac{13}{40}>\frac3{10}=\E_H[f_{1/5}],
\qquad
\E_L[f_{9/10}]=0<\frac1{30}=\E_H[f_{9/10}].
\]
Neither posterior distribution dominates the other in convex order, so the optimal experiments are Blackwell-incomparable.
\end{proof}

\subsection{Proposition \ref{prop:posterior-dependent-costs}}

\begin{proof}
At the participation boundary,
\[
\phi_L(\widehat\mu)=\phi_R(\widehat\mu)
=\bar k-\kappa(\widehat\mu).
\]
If $\kappa(\widehat\mu)\geq\bar k$, strict monotonicity makes $\phi_L<0$ below $\widehat\mu$ and $\phi_R<0$ above it. Testing is never strictly preferred. If equality holds, blind attack is chosen at the boundary because $d(\widehat\mu)>0$, giving the no-verification value.

If $\kappa(\widehat\mu)<\bar k$, both gains are positive at $\widehat\mu$, while $\phi_L(0)=-\kappa(0)\leq0$ and $\phi_R(1)=-\kappa(1)\leq0$. Continuity and strict monotonicity give unique roots $a\in[0,\widehat\mu)$ and $b\in(\widehat\mu,1]$, satisfying \eqref{eq:posterior-cost-boundaries}. Below $\widehat\mu$, testing relative to blind attack gives $\phi_L$; above it, testing relative to abstention gives $\phi_R$. The maintained tie rule therefore gives
\[
w_\kappa(\mu)=
\begin{cases}
d(\mu),&0\leq\mu\leq a,\\
-(1-\mu)L,&a<\mu<b,\\
0,&b\leq\mu\leq1.
\end{cases}
\]
This is exactly \eqref{eq:proof-pointwise-payoff} with $(a,b)$ in place of $(m,n)$. If $a>\mu_0$, the intermediate-cost argument in the proof of \Cref{thm:optimal-design} applies because $d(a)>0$, yielding support $\{a,1\}$. If $a\leq\mu_0$, its cheap-cost argument applies because $d(a)\leq0$ and $b>\widehat\mu>\mu_0$, yielding support $\{a,b\}$.

Finally, raising $\kappa$ pointwise can only replace testing with an untested action. Below $\widehat\mu$, replacing testing with blind attack increases the defender payoff by $\mu B\geq0$; above it, replacing testing with abstention increases that payoff by $(1-\mu)L\geq0$. The maintained tie convention preserves the same weak order at action boundaries. Thus $w_{\kappa_H}\geq w_{\kappa_L}$ pointwise, and order preservation of the least concave majorant gives $V_{\kappa_H}(q)\geq V_{\kappa_L}(q)$ for every $q$.
\end{proof}

\subsection{Proposition \ref{prop:inherited-prevalence}}

\begin{proof}
Write $m\triangleq m_k$. Since $k>\underline k$, $m>\mu_0$ and therefore $d(m)>0$. Equation \eqref{eq:value-regimes} gives
\[
V_k(q)=
\begin{cases}
d(q),&q\leq m,\\
(1-q)s_k,&q\geq m,
\end{cases}
\qquad
s_k=\frac{d(m)}{1-m}>0.
\]
The two slopes are $B+L>0$ and $-s_k<0$, so the value is concave. Subtracting the strictly convex quadratic adjustment cost makes \eqref{eq:inherited-prevalence-problem} strictly concave.

For $q<m$, the objective's derivative is
\[
B+L+\chi(q^{\mathrm I}-q)>0,
\]
because $q^{\mathrm I}>m>q$. For $q>m$, its derivative is
\[
\chi(q^{\mathrm I}-q)-s_k.
\]
If $\chi(q^{\mathrm I}-m)\leq s_k$, the right derivative at $m$ is non-positive and the left derivative is positive, so $q^*=m$. Otherwise, the unique zero of the right derivative is $q^*=q^{\mathrm I}-s_k/\chi\in(m,q^{\mathrm I})$. This proves \eqref{eq:inherited-prevalence-solution} and \eqref{eq:inherited-prevalence-condition}. The two posterior probabilities in the proposition sum to one and satisfy
\[
\frac{1-q^*}{1-m}m+\frac{q^*-m}{1-m}=q^*,
\]
so they are Bayes-plausible. The first posterior is attacked and the second is avoided, giving the value in \Cref{thm:optimal-design}.
\end{proof}

\subsection{Corollary \ref{cor:outcomes}}

\begin{proof}
Let $m=\mu_-(k)$. The attacked posterior receives probability $\lambda=(1-q)/(1-m)$; the other posterior contains only traps and is avoided. Hence
\[
G_k^*(q)=\lambda(1-m)=1-q,
\qquad
T_k^*(q)=\lambda m=\frac{(1-q)m}{1-m}.
\]
Since $dm/dk=1/(c+\ell)$ and $d[m/(1-m)]/dm=1/(1-m)^2$, differentiation gives \eqref{eq:outcome-test-cost-derivatives}.
Because no diagnostic is acquired, the attacker's expected payoff is $R_k(q)=\lambda u_B(m)$. Substituting $m=k/(c+\ell)$ into $BT_k^*(q)-LG_k^*(q)$ and $\lambda u_B(m)$ gives \eqref{eq:strategic-equilibrium-payoffs}; differentiation yields \eqref{eq:strategic-payoff-derivatives}.
\end{proof}

\subsection{Proposition \ref{prop:low-intelligence}}

\begin{proof}
The condition $B(v-c)\leq L(c+\ell)$ is equivalent to $\mu_0\geq\widehat\mu$. Therefore
\[
0<\tilde k=(v-c)(1-\mu_0)
\leq(v-c)(1-\widehat\mu)=\bar k.
\]
For $k<\bar k$, write $m=\mu_-(k)$ and $n=\mu_+(k)$. Then $m<\widehat\mu\leq\mu_0$, so $d(m)<0$. Moreover, $n\leq\mu_0$ holds exactly when $k\geq\tilde k$.

If $k\geq\bar k$, testing is never strictly preferred. At $k=\bar k$, the defender-preferred action at $\widehat\mu$ is abstention because $d(\widehat\mu)\leq0$, with the secondary tie rule applying at equality. Part (ii) of \Cref{prop:no-test-design} therefore gives the first regime.

If $\tilde k\leq k<\bar k$, then $n\leq\mu_0$. Define
\[
h(\mu)=
\begin{cases}
\displaystyle-L\left(1-\frac{\mu}{n}\right),&\mu<n,\\[6pt]
0,&\mu\geq n.
\end{cases}
\]
Its slopes are $L/n>0$ and zero, so it is concave. On $[0,m]$,
\[
h(\mu)-d(\mu)
=\mu\left[\frac{L}{n}-(B+L)\right]\geq0
\]
because $n\leq\mu_0=L/(B+L)$. On $(m,n)$,
\[
h(\mu)+(1-\mu)L
=L\mu\left(\frac1n-1\right)\geq0,
\]
because $n\leq1$. Above $n$, $h=w_k=0$. Hence $h$ majorises \eqref{eq:proof-pointwise-payoff}. The chord joining $(0,-L)$ and $(n,0)$ gives minimality by \eqref{eq:concave-chord-bound}. Weights $1-q/n$ on zero and $q/n$ on $n$ implement $h(q)$ whenever $0<q<n$.

Finally, if $0\leq k<\tilde k$, then $n>\mu_0$ and $d(m)<0$. The cheap-cost argument in the proof of \Cref{thm:optimal-design} requires only these two inequalities and the posterior payoff \eqref{eq:proof-pointwise-payoff}. It therefore applies with the same chord between $(m,d(m))$ and $(n,0)$, giving \eqref{eq:value-regimes} and support $\{m,n\}$ for $m<q<n$.
\end{proof}

\newpage

\section*{Declarations}

\paragraph{Data availability.} No datasets were generated or analysed during the current study.

\paragraph{Funding.} Funding information will be provided through the journal's submission system.

\paragraph{Competing interests.} Competing-interest information will be provided through the journal's submission system.

\paragraph{Use of generative artificial intelligence.} Generative artificial-intelligence tools (specifically, Refine.ink, Claude Opus 5 (Fable wouldn't touch it), GPT 5.6 Sol and the tools of \cite{koren2026theoristtoolboxtoolsagent}) were used for literature discovery, drafting, analytical exploration, numerical verification and editorial assistance. The author reviewed and verified the formal arguments, sources and manuscript and accepts full responsibility for its content.

\newpage
\bibliographystyle{ecta}
\bibliography{calibrated_bait_v4}

\end{document}